\documentclass[twocolumn]{ceurart}

\usepackage{graphicx}
\usepackage{subfigure}

\usepackage{amsmath,amsthm,amssymb}
\usepackage{thmtools,thm-restate}
\usepackage{xurl}
\usepackage{hyperref}
\hypersetup{
	colorlinks=true,
	linkcolor=blue,
	citecolor=blue,
	filecolor=blue,
	urlcolor=blue
}
\usepackage[nameinlink]{cleveref}
\crefname{table}{Table}{Tables}

\usepackage{tikz}
\usetikzlibrary{calc,arrows.meta}

\usepackage{enumitem}
\setlist[itemize]{leftmargin=4mm}

\newtheorem{theorem}{Theorem}
\newtheorem{definition}[theorem]{Definition}
\newtheorem{lemma}[theorem]{Lemma}

\newcommand{\sq}[1]{\left [ #1 \right ]}
\newcommand{\prn}[1]{\left ( #1 \right )}

\newcommand{\CSSCL}{CSSCL}
\newcommand{\CAMB}{CAMB}

\newcommand{\eps}{\varepsilon}
\newcommand{\cA}{\mathcal{A}}
\DeclareMathOperator{\DLap}{\mathrm{DLap}}
\DeclareMathOperator{\RMSRE}{\mathrm{RMSRE}}
\DeclareMathOperator*{\E}{\mathbb{E}}
\DeclareMathOperator{\Var}{\mathsf{Var}}

\newcommand{\hx}{\widehat{x}}
\newcommand{\zup}{z^{\uparrow}}
\newcommand{\zUp}{z^{\Uparrow}}
\newcommand{\zdown}{z^{\downarrow}}
\newcommand{\zDown}{z^{\Downarrow}}
\newcommand{\var}{\mathsf{var}}
\newcommand{\varup}{\mathsf{var}^{\uparrow}}
\newcommand{\varUp}{\mathsf{var}^{\Uparrow}}
\newcommand{\vardown}{\mathsf{var}^{\downarrow}}
\newcommand{\varDown}{\mathsf{var}^{\Downarrow}}
\newcommand{\hvar}{\widehat{\var}}

\newcommand{\child}{\mathrm{child}}
\newcommand{\parent}{\mathrm{parent}}
\newcommand{\anc}{\mathrm{anc}}
\newcommand{\N}{\mathbb{N}}
\newcommand{\R}{\mathbb{R}}
\usepackage{algorithm}
\usepackage[noend]{algorithmic}

\definecolor{Gred}{RGB}{219, 50, 54}
\definecolor{Ggreen}{RGB}{60, 186, 84}
\definecolor{Gblue}{RGB}{72, 133, 237}
\definecolor{Gyellow}{RGB}{247, 178, 16}
\definecolor{ToCgreen}{RGB}{0, 128, 0}
\definecolor{myGold}{RGB}{231,141,20}
\definecolor{myBlue}{rgb}{0.19,0.41,.65}
\definecolor{myPurple}{RGB}{175,0,124}

\newcommand{\impCol}[1]{\textcolor{black!30!Gred}{#1}}
\newcommand{\campaignCol}[1]{\textcolor{black!30!Gred}{#1}}
\newcommand{\locationCol}[1]{\textcolor{black!30!Gred}{#1}}
\newcommand{\convCol}[1]{\textcolor{black!30!Ggreen}{#1}}
\newcommand{\histCol}[1]{\textcolor{black!30!Gyellow}{#1}}

\allowdisplaybreaks
\begin{document}

\copyrightyear{2023}
\copyrightclause{Copyright for this paper by its authors.
  Use permitted under Creative Commons License Attribution 4.0
  International (CC BY 4.0).}

\conference{AdKDD'23}

\title{Optimizing Hierarchical Queries for the \\ Attribution Reporting API}

\iffalse
\author{Matthew Dawson, Badih Ghazi, Pritish Kamath, Kapil Kumar, Ravi Kumar, Bo Luan \\ Pasin Manurangsi, Nishanth Mundru, Harikesh Nair, Adam Sealfon, Shengyu Zhu}
\affiliation{
\institution{Google}
\country{}
}
\email{{mwdawson, pritishk, kkapil, luanbo, pasin, nmundru, hsnair, adamsealfon, shengyuzhu}@google.com}
\email{{badihghazi, ravi.k53}@gmail.com}
\else
\author[1]{Matthew Dawson}[email=mwdawson@google.com]
\author[1]{Badih Ghazi}[email=badihghazi@gmail.com]
\cormark[1]
\author[1]{Pritish Kamath}[email=pritishk@google.com]
\author[1]{Kapil Kumar}[email=kkapil@google.com]
\author[1]{Ravi Kumar}[email=ravi.k53@gmail.com]
\author[1]{Bo Luan}[email=luanbo@google.com]
\author[1]{Pasin Manurangsi}[email=pasin@google.com]
\author[1]{Nishanth Mundru}[email=nmundru@google.com]
\author[1]{Harikesh Nair}[email=hsnair@google.com]
\author[1]{Adam Sealfon}[email=adamsealfon@google.com]
\author[1]{Shengyu Zhu}[email=shengyuzhu@google.com]
\address[1]{Google}
\fi

\begin{abstract}
We study the task of performing hierarchical queries based on summary reports from the {\em Attribution Reporting API} for ad conversion measurement. We demonstrate that  methods from optimization and  differential privacy can help cope with the noise introduced by privacy guardrails in the API. In particular, we present algorithms for (i) denoising the API outputs and ensuring consistency across different levels of the tree, and (ii) optimizing the privacy budget across different levels of the tree. We provide an experimental evaluation of the proposed algorithms on public datasets.
\end{abstract}

\begin{keywords}
ad conversion measurement \sep
hierarchical aggregation \sep
differential privacy \sep
attribution reporting API
\end{keywords}

\maketitle

\section{Introduction}

Over the last two decades, third-party cookies \cite{wiki:HTTP_cookie} have been essential to online advertising, and particularly to ad conversion measurement, whereby an ad impression (e.g., a click or a view) on a publisher site or app could be joined to a conversion on the advertiser, in order to compute aggregate conversion reports (e.g., the number of conversions attributed to a subset of impressions) or to train ad bidding models (e.g., \cite{lu2017practical, choi2020delayed, qiu2020predicting, gu2021estimating}). However, in recent years, privacy concerns have led several browsers to decide to deprecate third-party cookies, e.g., \cite{safari, mozilla, chromium}. The \emph{Attribution Reporting API} \cite{chrome-attribution-reporting, aggregate-api-android} seeks to provide privacy-preserving ways for measuring ad conversions on the Chrome browser and the Android mobile operating system. This API relies on a variety of mechanisms for limiting the privacy leakage, including bounding the contributions to the output reports of the conversions attributed to each impression, as well as noise injection to satisfy differential privacy (for more details, see \Cref{sec:preliminaries}).

\begin{figure}[thb]
\centering
\newcommand{\crl}[1]{\{ #1 \}}
\begin{tikzpicture}[
  grow=down,
  level 1/.style={sibling distance=34mm},
  level 2/.style={sibling distance=14mm},
  level 3/.style={sibling distance=7mm},
  level 4/.style={sibling distance=5mm},
  level distance=12mm,
  vert/.style={circle, minimum size=2.5mm, draw, line width=1pt},
  leaf/.style={rectangle, minimum size=2.5mm, draw=black!30!Gyellow, line width=0.7pt, fill=Gyellow!30},
  campaign/.style={draw=black!30!Gred, fill=Gred!30, line width=0.7pt},
  geo/.style={draw=black!30!Gred, fill=Gred!30, line width=0.7pt},
  conv/.style={draw=black!30!Ggreen, fill=Ggreen!30, line width=0.7pt},
  textbox/.style={rectangle, rounded corners=2pt, line width=1pt},
  every node/.style = {outer sep=0pt},
  scale=1, transform shape
]

\node[vert, campaign] (1) {}
  child {
    node[vert, geo] (2) {}
    child {
      node[vert, conv] (4) {}
      child {
        node[leaf] (9) {}
      }
      child {
        node[leaf] (10) {}
      }
    }
    child {
      node[vert, conv] (5)  {}
      child {
        node[leaf] (11) {}
      }
      child {
        node[leaf] (12) {}
      }
    }
    child {
      node[vert, conv] (6) {}
      child {
        node[leaf] (13) {}
      }
      child {
        node[leaf] (14) {}
      }
    }
  }
  child {
    node[vert, geo] (3) {}
    child {
      node[vert, conv] (7) {}
      child {
        node[leaf] (15) {}
      }
      child {
        node[leaf] (16) {}
      }
    }
    child {
      node[vert, conv] (8) {}
      child {
        node[leaf] (17) {}
      }
      child {
        node[leaf] (18) {}
      }
    }
  };

\fontsize{7}{15}\selectfont
\node[textbox, campaign, right=4mm] at (1) {campaign};
\node[textbox, geo, right=3mm] at (3) {location};
\node[textbox, conv, right=3mm] at (8) {day};

\node[rotate=34] at ( $(1)!0.5!(2)+(-0.15,0.15)$ ) {123};
\node[rotate=-40] at ( $(2)!0.5!(6)+(0.15,0.15)$ ) {New York};
\node[rotate=-73] at ( $(6)!0.5!(14)+(0.15,-0.02)$ ) {Friday};
\end{tikzpicture}
\caption{Example of Hierarchical Queries.}
\label{fig:hierarchical_queries}
\end{figure}

We study the \emph{conversion reporting} task, where a query consists of counting the number of conversions attributed to impressions such that some features of the conversion and the impression are restricted to certain given values.  In particular, we focus on the \emph{hierarchical queries} setting where the goal is to estimate the number of conversions attributed to impressions where the features are restricted according to certain nested conditions.  Consider the example in \Cref{fig:hierarchical_queries}, where we wish to estimate the number of conversions:
\begin{itemize}[nosep]
\item attributed to impressions from campaign $123$.
\item that are also restricted to take place in New York.
\item that are further restricted to occur on a Friday.
\end{itemize}
In general, the goal is to estimate the conversion count for each node in a given tree, similar to the one in \Cref{fig:hierarchical_queries}.

Such estimates can be obtained using summary reports from the Attribution Reporting API, as discussed in \Cref{sec:agg_api_summary}. In this work, we present a linear-time post-processing algorithm that denoises the estimates for different nodes that are returned by the API and ensures that the estimates are consistent with respect to the tree structure. We also show that our  algorithm is optimal among all linear unbiased estimators for arbitrary trees, extending results for regular trees~\cite{hay2009boosting, cormode2012differentially} (\Cref{sec:post-processing}).  Since the API allows the ad-tech to allocate a privacy budget across different measurements containing contributions from the same impression, we provide an algorithm for optimizing the allocation of the privacy budget across the different levels of the tree (\Cref{sec:privacy_budgeting}).

We start by recalling in \Cref{sec:preliminaries} the basics of differential privacy and summarizing the query model for summary reports in the Attribution Reporting API. In \Cref{sec:optimization_problem}, we formally define the optimization problem that we seek to solve. In \Cref{sec:experimental_evaluation}, we provide an experimental evaluation of our algorithms on two public datasets. We conclude with some future directions in \Cref{sec:conclusion}.

\section{Preliminaries}\label{sec:preliminaries}

\subsection{Differential Privacy (DP)}
Let $n$ be the number of rows in the dataset and let $\mathcal{X}$ be the (arbitrary) set representing the domain of values for each row. We  distinguish two types of columns (a.k.a. attributes): {\em known} and {\em unknown}. We also assume knowledge of the set of possible values that each unknown attribute can take.

\begin{definition}[DP \cite{DworkMNS06}]\label{def:differential_privacy}
For $\eps \geq 0$, an algorithm $\cA$ is \emph{$\eps$-DP} if for every pair $X, X'$ of datasets that differ on the unknown attributes of one row\footnote{We note that this instantiation of the DP definition is related to the label DP setting in machine learning \cite{chaudhuri2011differentially, ghazi2021deep, esfandiari2022label, GhaziKKLMVZ2022}, where the features of an example are considered known and only its label is deemed unknown and sensitive. In our use case, there might be multiple unknown attributes, whose concatenation is treated in a conceptually similar way to the {\em label} in the label DP setting.}, and for every possible output $o$, it holds that $\Pr[\cA(X) = o] \leq e^\eps \cdot \Pr[\cA(X') = o]$.
\end{definition}

\begin{lemma}[Basic Composition]\label{le:basic_composition_DP}
Let $\cA$ be an algorithm that runs $k$ algorithms $\mathcal{A}_1, \dots, \mathcal{A}_{k}$ on the same dataset such that $\mathcal{A}_i$ is $\eps_i$-DP with $\eps_i \geq 0$ for each $i \in [k]$. Then, $\mathcal{A}$ is $(\sum_{i=1}^k \eps_i)$-DP.
\end{lemma}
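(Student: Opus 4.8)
The plan is to induct on $k$, reducing everything to the two-fold case. Assuming the result for $k-1$, the composite of $\mathcal{A}_1, \dots, \mathcal{A}_{k-1}$ is $\big(\sum_{i=1}^{k-1}\eps_i\big)$-DP, and it remains to compose one $(\sum_{i<k}\eps_i)$-DP algorithm with the $\eps_k$-DP algorithm $\mathcal{A}_k$; so it suffices to prove the statement for $k=2$. Fix a pair of datasets $X, X'$ that differ on the unknown attributes of a single row, and describe the output of $\mathcal{A}$ as the tuple $o = (o_1, o_2)$ where $o_i$ is the output of $\mathcal{A}_i$.

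The core step is a factoring argument. When the $\mathcal{A}_i$ use independent random coins, $\Pr[\mathcal{A}(X) = o] = \Pr[\mathcal{A}_1(X) = o_1]\cdot\Pr[\mathcal{A}_2(X) = o_2]$; applying $\eps_1$-DP to the first factor and $\eps_2$-DP to the second gives $\Pr[\mathcal{A}(X)=o] \le e^{\eps_1}\Pr[\mathcal{A}_1(X')=o_1]\cdot e^{\eps_2}\Pr[\mathcal{A}_2(X')=o_2] = e^{\eps_1+\eps_2}\Pr[\mathcal{A}(X')=o]$. Letting $o$ range over an arbitrary output event and summing (or integrating) this pointwise inequality yields the $(\eps_1+\eps_2)$-DP guarantee, and the induction closes. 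If one wants the more general adaptive form, in which $\mathcal{A}_2$ may receive $o_1$ as auxiliary input, the same computation goes through with the conditional probability $\Pr[\mathcal{A}_2(X;o_1)=o_2]$ in place of $\Pr[\mathcal{A}_2(X)=o_2]$, since for each fixed value of $o_1$ the map $X \mapsto \mathcal{A}_2(X;o_1)$ is still $\eps_2$-DP.

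The step needing the most care is making the factoring rigorous when the outputs lie in a continuous or mixed space, where "$\Pr[\mathcal{A}(X)=o]$" must be read as a density. The clean way is to avoid pointwise densities and instead verify $\Pr[\mathcal{A}(X)\in S] \le e^{\eps_1+\eps_2}\Pr[\mathcal{A}(X')\in S]$ directly for measurable $S$: use Fubini to write $\Pr[\mathcal{A}(X)\in S] = \int \Pr[\mathcal{A}_2(X)\in S_{o_1}]\, d\mu_{\mathcal{A}_1(X)}(o_1)$ with $S_{o_1}$ the $o_1$-section of $S$, bound the integrand by $e^{\eps_2}\Pr[\mathcal{A}_2(X')\in S_{o_1}]$, and then bound the resulting integral against $\mu_{\mathcal{A}_1(X)}$ by $e^{\eps_1}$ times the same integral against $\mu_{\mathcal{A}_1(X')}$. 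In the discrete setting treated implicitly here, the one-line product manipulation above is essentially the entire proof.
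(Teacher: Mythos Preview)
Your argument is correct and is the standard proof of basic composition for pure $\eps$-DP. Note, however, that the paper does not actually prove this lemma: it is stated in the preliminaries as a known background fact (with an implicit pointer to the monograph~\cite{dwork2014algorithmic}), so there is no ``paper's own proof'' to compare against. Your write-up is more than sufficient for what the paper needs.
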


\begin{lemma}[Post-processing]\label{le:post-processing}
    Let $\eps > 0$, and $R$ and $R'$ be any two sets. If $\mathcal{A}: \mathcal{X}^n \to R$ is an $\eps$-DP algorithm, and $f: R \to R'$ is any randomized mapping, then $(f \circ \mathcal{A}): \mathcal{X}^n \to R'$ is $\eps$-DP. 
\end{lemma}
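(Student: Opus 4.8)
The plan is to prove the statement directly from \Cref{def:differential_privacy} by conditioning on the (random) intermediate output produced by $\cA$. Fix an arbitrary pair $X, X' \in \mathcal{X}^n$ that differ on the unknown attributes of a single row, and fix an arbitrary output $o' \in R'$; the goal is then to show that $\Pr[(f \circ \cA)(X) = o'] \le e^{\eps} \cdot \Pr[(f \circ \cA)(X') = o']$, since $X, X', o'$ are arbitrary.

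First I would use the fact that the internal randomness of the mapping $f$ is drawn independently of the randomness of $\cA$, so that conditioning on the value $o \in R$ returned by $\cA$ decouples the two stages. This lets me write $\Pr[(f \circ \cA)(X) = o'] = \sum_{o \in R} \Pr[f(o) = o'] \cdot \Pr[\cA(X) = o]$, and identically with $X'$ in place of $X$ (the coefficients $\Pr[f(o) = o']$ are the same in both expansions because $f$ does not see the dataset). I would then apply the $\eps$-DP guarantee of $\cA$ termwise: for every $o \in R$ we have $\Pr[\cA(X) = o] \le e^{\eps} \cdot \Pr[\cA(X') = o]$, and since each weight $\Pr[f(o) = o'] \ge 0$, summing over $o$ preserves the inequality; factoring out $e^{\eps}$ gives exactly the desired bound.

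The only place where a little care is needed is the interchange of the pointwise $\eps$-DP inequality with the sum over $R$ (or, in a fully general measure-theoretic formulation, with an integral over $R$ against the Markov kernel induced by $f$): this step is legitimate precisely because the mixing weights are nonnegative and independent of the input dataset, so monotonicity of the sum/integral applies. If $R$ is uncountable one would phrase the expansion via this Markov kernel and rerun the same monotonicity argument; since the DP definition as stated here is written in terms of atomic outcomes $o$, the discrete version above is sufficient for our purposes. It is also worth remarking that post-processing can only weaken the dependence on the data, so the conclusion holds with the \emph{same} $\eps$ rather than a smaller one, which is all that is claimed.
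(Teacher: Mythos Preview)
Your argument is correct and is the standard proof of closure of $\eps$-DP under post-processing: condition on the intermediate output of $\cA$, use that the mixing weights $\Pr[f(o)=o']$ are nonnegative and data-independent, and apply the per-outcome DP bound termwise. There is nothing to compare against, however: the paper states \Cref{le:post-processing} as a background fact in the preliminaries and does not supply its own proof, deferring instead to the DP monograph~\cite{dwork2014algorithmic}.
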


For an extensive overview of DP, we refer the reader to the monograph \cite{dwork2014algorithmic}. A commonly used method in DP is the discrete Laplace mechanism. To define it, we recall the notion of $\ell_1$-sensitivity.
\begin{definition}[$\ell_1$-sensitivity]
    Let $\mathcal{X}$ be any set, and $f: \mathcal{X}^n \to \mathbb{R}^d$ be a $d$-dimensional function. Its \emph{$\ell_1$-sensitivity} is defined as $\Delta_1 f := \max_{X, X'} \|f(X) - f(X')\|_1$, where $X$ and $X'$ are two datasets that differ on the unknown attributes of a single row.
\end{definition}

\begin{definition}[Discrete Laplace Mechanism]\label{def:disc_lap_mech}
    The \emph{discrete Laplace distribution} centered at $0$ and with parameter $a > 0$, denoted by $\DLap(a)$, is the distribution whose probability mass function at integer $k$ is $\frac{e^{a} - 1}{e^{a} + 1} \cdot e^{- a|k|}$.
    The \emph{$d$-dimensional discrete Laplace mechanism} with parameter $a$ applied to a function $f: \mathcal{X}^n \to \mathbb{Z}^d$, on input a dataset $X \in \mathcal{X}^n$, returns $f(X) + Z$ where $Z$ is a $d$-dimensional noise random variable whose coordinates are sampled i.i.d. from $\DLap(a)$.
\end{definition}

\begin{lemma}\label{lem:DP_Discrete_Laplace_Mechanism}
For every $\eps > 0$, the $d$-dimensional discrete Laplace mechanism with parameter 
$a \le \eps / \Delta_1 f$
is $\eps$-DP.
\end{lemma}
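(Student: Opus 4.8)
The plan is to reduce the statement to a direct pointwise computation of the likelihood ratio, exploiting the fact that the coordinates of the noise are independent and that $\DLap(a)$ has full support on $\Z$ (so no denominator ever vanishes). Fix two datasets $X, X' \in \mathcal{X}^n$ that differ on the unknown attributes of a single row, and write $y := f(X)$ and $y' := f(X')$, both in $\Z^d$. By the definition of $\ell_1$-sensitivity, $\|y - y'\|_1 \le \Delta_1 f$. For any output $o \in \Z^d$, independence of the noise coordinates gives
\[
\Pr[\cA(X) = o] \;=\; \prod_{i=1}^d \frac{e^a - 1}{e^a + 1}\, e^{-a|o_i - y_i|},
\]
and similarly for $X'$ with $y_i$ replaced by $y_i'$.

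First I would take the ratio of these two expressions; the normalizing constants cancel coordinatewise, leaving
\[
\frac{\Pr[\cA(X) = o]}{\Pr[\cA(X') = o]} \;=\; \exp\!\left( a \sum_{i=1}^d \big( |o_i - y_i'| - |o_i - y_i| \big) \right).
\]
Next I would bound each summand using the (reverse) triangle inequality: $|o_i - y_i'| - |o_i - y_i| \le |y_i - y_i'|$. Summing over $i$ yields an exponent of at most $a \sum_{i=1}^d |y_i - y_i'| = a \|y - y'\|_1 \le a\, \Delta_1 f$. Finally, the hypothesis $a \le \eps/\Delta_1 f$ gives $a\,\Delta_1 f \le \eps$, so the likelihood ratio is at most $e^\eps$. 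Since $o$, $X$, and $X'$ were arbitrary, this is exactly the condition in \Cref{def:differential_privacy}, and the mechanism is $\eps$-DP.

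There is essentially no serious obstacle here; the only points that require a word of care are (i) confirming that $\DLap(a)$ assigns positive mass to every integer, so that dividing by $\Pr[\cA(X') = o]$ is always legitimate, and (ii) making sure the triangle-inequality bound is applied in the direction that upper-bounds the ratio (and noting that by symmetry the same argument with $X$ and $X'$ swapped covers the other direction, which is all \Cref{def:differential_privacy} demands anyway). Everything else is a routine cancellation and a single application of the bound $\Delta_1 f \ge \|f(X) - f(X')\|_1$.
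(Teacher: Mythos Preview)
Your argument is correct and is precisely the standard proof of this fact. The paper itself does not supply a proof of this lemma; it is stated as a known background result (in the spirit of the discrete analogue of the Laplace mechanism, cf.\ \cite{dwork2014algorithmic}), so there is no paper-side proof to compare against. Your handling of the two minor points---positivity of the $\DLap(a)$ mass function and the direction of the triangle inequality---is exactly what is needed.
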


\subsection{Attribution Reporting API}\label{sec:agg_api_summary}

The \emph{aggregatable reports} \cite{aggregate-api-android} are constructed as follows:
\begin{itemize}[leftmargin=*]
\item {\bf Impression (a.k.a. source) registration:} The API provides a mechanism for the ad-tech to register an impression (e.g., a click or view) on the publisher site or app.  During registration, the ad-tech can specify impression-side aggregation keys (e.g., one corresponding to the campaign or geo location).

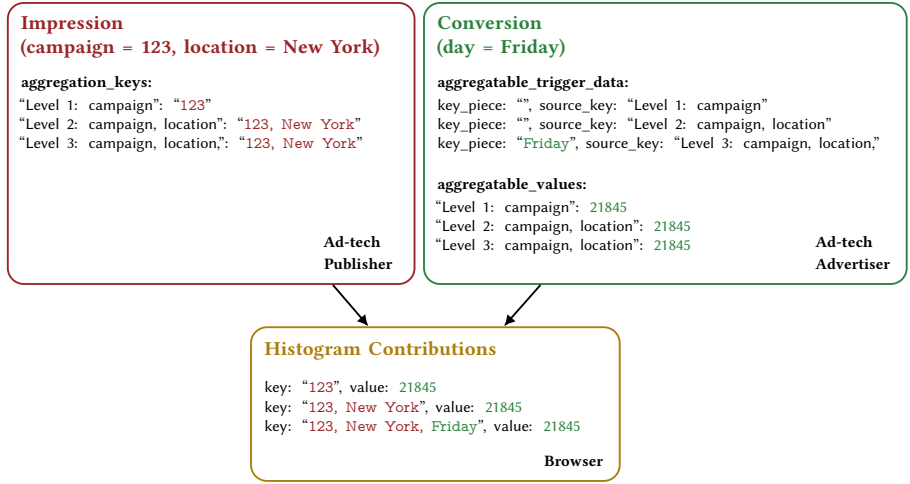
\begin{figure*}[t]
\centering
\begin{tikzpicture}
\tikzset{
	box/.style = {rectangle, rounded corners=6pt, line width=0.7pt, draw, inner sep=5pt}
}
\node[box, text width=5cm, text depth=3.2cm, below right, draw=black!30!Gred] (imp) at (0, 0) {
    {\footnotesize
    \impCol{\bf Impression}\\
    \impCol{\bf (campaign = 123, location = New York)}\\[1mm]
    }
    
    {\bf \scriptsize aggregation\_keys:}\\[0.5mm]
    
    {\scriptsize
    ``Level 1: campaign'': ``\impCol{\texttt{123}}'' \\
    ``Level 2: campaign, location'': ``\impCol{\texttt{123, New York}}'' \\
    ``Level 3: campaign, location,'': ``\impCol{\texttt{123, New York}}''
    \\[2.5mm]
    } 
};

\node[text width=8mm, above left] at ($(imp.south east)+(-0.3,0.1)$){\bf \scriptsize
	Ad-tech\\[-0.5mm]
	Publisher
};

\node[box, text width=6cm, text depth=3.2cm, below right, draw=black!30!Ggreen] (conv) at ($(imp.north east) + (0.1,0)$) {
    {\footnotesize
    \convCol{\bf Conversion}\\
    \convCol{\bf (day = Friday)}\\[1mm]
    }
    
    {\bf \scriptsize aggregatable\_trigger\_data:}\\[0.5mm]
    {\scriptsize
    key\_piece:  ``'', source\_key: ``Level 1: campaign''\\
    key\_piece: ``'', source\_key: ``Level 2: campaign, location''\\
    key\_piece: ``\convCol{\texttt{Friday}}'', source\_key: ``Level 3: campaign, location,''\\[2mm]
    }
    {\bf \scriptsize aggregatable\_values:}\\[0.5mm]
    {\scriptsize
    ``Level 1: campaign'': \convCol{21845}\\
    ``Level 2: campaign, location'': \convCol{21845}\\
    ``Level 3: campaign, location'': \convCol{21845} \\[2.5mm]
    }
};
\node[text width=8mm, above left] at ($(conv.south east)+(-0.3,0.1)$){\bf \scriptsize
    Ad-tech\\[-0.5mm]
    Advertiser
};

\node[box, text width=4.5cm, text depth=1.5cm, below right, draw=black!30!Gyellow] (hist) at (3.2, -4.3) {
    {\footnotesize
    \histCol{\bf Histogram Contributions}\\[2.5mm]
    }
    
    {\scriptsize
    key: ``\impCol{\texttt{123}}'', value: \convCol{21845}\\
    key:  ``\impCol{\texttt{123, New York}}'', value: \convCol{21845}\\
    key: ``\texttt{\impCol{123, New York,} \convCol{Friday}}'', value: \convCol{21845}\\[2.5mm]
    }
};
\node[text width=6mm, above left] at ($(hist.south east)+(-0.3,0.1)$){\bf \scriptsize
	Browser
};

\path[-{Latex[width=1.5mm,length=1.5mm]}, line width=0.7pt]
(imp) edge (hist)
(conv) edge (hist);
\end{tikzpicture}
\caption{Histogram Contributions Generation.}
\label{fig:hist_contrib_construct}
\end{figure*}

\item {\bf Conversion (a.k.a. trigger) registration:} The API also provides a mechanism for the ad-tech to register a conversion on the advertiser site or app. As it does so, the ad-tech can specify conversion-side key pieces along with aggregatable values corresponding to each setting of the impression-side aggregation keys. E.g., a conversion-side key piece could capture the conversion type or (a discretization of) the conversion timestamp. The combined aggregation key (which can be thought of as the concatenation of the impression-side aggregation key and the conversion-side key piece) is restricted to be at most $128$ bits. The aggregatable value is required to be an integer between $1$ and the \emph{$L_1$ parameter} of the API, which is set to $2^{16} = 65{,}536$.
\item {\bf Attribution:} The API supports last-touch attribution where the conversion is attributed to the last (unexpired) impression registered by the same ad-tech. (The API supports a broader family of single-touch attribution schemes allowing more flexible prioritization over impressions and conversions; we do not discuss this aspect any further as it is orthogonal to our algorithms.\footnote{The API moreover enforces a set of rate limits on registered impressions, and attributed conversions; we omit discussing these since they are not essential to the focus of this paper. We refer the interested reader to the  documentation \cite{attribution-reporting-wicg}.})
\item {\bf Histogram contributions generation:} The API enforces that the sum of contributions across all aggregatable reports generated by different conversions attributed to the same impression is capped to at most the $L_1 = 2^{16}$ parameter. An example construction of histogram contributions is in \Cref{fig:hist_contrib_construct}. While in this example the conversion key piece depends only on conversion information (the conversion day), it can be replaced by an attribute that depends on both impression and conversion information (e.g., a discretization of the difference between impression time and conversion time); this can be done using filters \cite{agg-api-trigger-filters}. 
\end{itemize}

At query time, a set of histogram keys is requested by the ad-tech \cite{aggregation-service}. (The set of keys that are queried could be set to the Cartesian product of all \emph{known} values of the impression-side features with the set of all \emph{possible} values of the conversion-side features.) The aggregatable reports are combined in the aggregation service to produce a \emph{summary report} by applying the discrete Laplace mechanism (\Cref{def:disc_lap_mech}) with parameter $\eps/L_1$ to the requested aggregation keys. This report satisfies $\eps$-DP where each row of the dataset corresponds to an impression and its attributed conversions if any (see \Cref{fig:dataset_for_DP} for an example\footnote{The 
$^*$ in the conversion-related field in \Cref{fig:dataset_for_DP} indicates that the click corresponding to that row did not get an attributed conversion.}), and where the known columns in \Cref{def:differential_privacy} are the attributes that  only depend on impression information (campaign and location in \Cref{fig:dataset_for_DP}).

{\fontsize{8.5}{12}\selectfont
\renewcommand{\arraystretch}{1.3}
\begin{table}
\caption{Dataset Post-Attribution and Pre-Aggregation.}
\begin{tabular}{c|l|l|l}
\hline
{\bf Click} & \campaignCol{\bf campaign} & \locationCol{\bf location} & \convCol{\bf day}\\
\hline
1 & \campaignCol{123} & \locationCol{Paris} & \convCol{Monday}\\
2 & \campaignCol{456} & \locationCol{Chicago} & \convCol{Friday}\\
3 & \campaignCol{789} & \locationCol{London} & \convCol{*}\\
4 & \campaignCol{123} & \locationCol{New York} & \convCol{Friday}\\
$\cdots$ & \campaignCol{$\cdots$} & \locationCol{$\cdots$} & \convCol{$\cdots$}\\
\hline
\end{tabular}
\label{fig:dataset_for_DP}
\end{table}}
In the hierarchical aggregation setting, each node of the tree corresponds to an aggregation key, and the level of the node is (implicitly) specified in the impression-side aggregation key and/or in the conversion-side key piece (as shown in \Cref{fig:hist_contrib_construct}).

For the use case of estimating conversion counts, the aggregatable value could be set so as to increment the count by $+1$. Since the scale of the noise injected in summary reports is $L_1/\eps$, the ad-tech can improve accuracy by setting the contribution of an increment to $+L_1$ instead of $+1$ (and then scaling down the value it receives from the aggregation service by $L_1$). If the $L_1$ contribution has to be divided across multiple keys, the contribution of each increment needs to be scaled down accordingly. E.g., in \Cref{fig:hist_contrib_construct}, since each impression affects $3$ keys, the contribution is set to $\lfloor L_1 / 3 \rfloor = 21,845$.

\section{Optimization Problem}\label{sec:optimization_problem}

\subsection{Hierarchical Query Estimation}

\noindent We formally define the {\em hierarchical query estimation} problem. Given a dataset $X$, consider a tree where each node corresponds to the subset of the rows of $X$, conditioned on the values of some of the attributes. We consider the setting where each level of the tree introduces a conditioning on the value of a new attribute. For {\em known} attributes, the child nodes of a node correspond to the different values taken by that attribute in $X$ within the rows. For {\em unknown} attributes, the child nodes correspond to {\em all} possible values for that attribute, whether or not they actually occur in the dataset. Given this tree, the problem is to privately release the approximate number of data rows corresponding to each node that have an attributed conversion.

\subsection{Error Measure and Consistency}
We consider the following error measure, which is defined in \cite{noise-lab}.

\begin{definition}[RMS Relative Error at Threshold]
For a count $c \geq 0$, and its randomized estimate $\hat{c} \in \mathbb{R}$, the \emph{Root Mean Squared Relative Error at Threshold $\tau$} when estimating $c$ by $\hat{c}$ is defined as
$
\RMSRE_\tau(c, \hat{c}) := \sqrt{\E\sq{\prn{\frac{|\hat{c} - c|}{\max(\tau, c)}}^2}}\,,
$
where the expectation is over the randomness of $\hat{c}$.    
\end{definition}

Suppose we have the count estimates (e.g., the number of conversions as 
in \Cref{fig:hierarchical_queries}) at every node in a tree.  Each conversion contributes to the count for multiple nodes. For example, a conversion for ad campaign 123 that occurs on a Friday in New York contributes to the $6$th leaf from the left, but also to each of its ancestor nodes. This imposes relationships among the counts at various nodes. If the only geo locations with conversions attributed to ad campaign 123 on Friday are New York and Chicago, then the total number of such conversions must equal the sum of the number of such conversions in each of the two locations. More generally, the count for any node must equal the sum of the counts for its children. An estimator with this property is called \emph{consistent}.

For a tree $T$ with levels $L_0$, $\ldots$, $L_d$, with $L_i$ being the set of nodes at level $i$, and estimators
$\hat{c}_v$ of the counts $c_v$ at each node $v$, define 
the tree error $\RMSRE_\tau(T)$ to be
\[
\sqrt{\frac{1}{d+1} \sum_{i=0}^d 
\left(
\frac{1}{|L_i|} \sum_{v \in L_i} \RMSRE_\tau(c_v, \hat{c}_v)^2 \right)}\,.
\]

The goal of the hierarchical query estimation problem is to privately estimate the counts of every node with minimum possible tree error, where the estimates should be consistent.  To achieve this, we will employ post-processing and privacy budgeting strategies. 

\section{Post-processing Algorithms}\label{sec:post-processing}

Directly applying the discrete Laplace mechanism to add independent noise to each node does not result in a consistent estimate.
Consistency can be achieved by estimating only the counts of leaves  and inferring the count of each
nonleaf node by adding the counts of its leaf descendants, but this can lead to large error for nodes higher up in the tree.
Alternatively, one can achieve consistency by post-processing independent per-node estimates. Since DP is preserved under post-processing (\Cref{le:post-processing}), this comes at no cost to the privacy guarantee, and it can substantially improve accuracy. 

Since the count of any node must equal the sum of the counts of its children, we can obtain a second independent estimate of the count of any nonleaf node by summing the estimates of its children. 
We can combine these two estimates to obtain a single estimate of lower variance.
Extending this observation, Hay et al. \cite{hay2009boosting} and Cormode et al. \cite{cormode2012differentially} give efficient post-processing algorithms for regular (every non-leaf has the same number of children) and balanced (every leaf is at the same depth) trees, achieving consistency and also a substantial improvement in accuracy. In particular, estimating the counts of each node can be expressed as a linear regression problem, and these algorithms compute the least-squares solution, which is known to achieve optimal error variance among all unbiased linear estimators.  Moreover, this special case of least-squares regression can be solved in linear time.

\newcommand{\CombineEstimates}{\mathsf{CombineEstimates}}
\begin{algorithm}[t]
\caption{$\CombineEstimates$}
\label{alg:combine-estimates}
\begin{algorithmic}
\STATE {\bf Input:} $(x; \var_x)$,  $(y; \var_y)$ : $x$, $y$ are samples from random variables $X, Y$ resp. such that $\E X = \E Y$ and variances $\Var(X) = \var_x$ and $\Var(Y) = \var_y$.
\STATE {\bf Output:} $(z; \var_z)$ : combined estimate and variance.
\STATE %
\STATE $\var_z \gets \frac{\var_x \cdot \var_y}{\var_x + \var_y}$
\STATE $z \gets \var_z \cdot \prn{\frac{x}{\var_x} + \frac{y}{\var_y}}$
\RETURN $(z; \var_z)$
\end{algorithmic}
\end{algorithm}

\begin{algorithm*}[t]
\caption{$\mathsf{TreePostProcessing}$} %
\label{alg:post-processing}
\begin{algorithmic}
\STATE {\bf Params:} Tree $T$ with root $r$. %
\STATE {\bf Input:} $(x_v; \var_v)_{v \in T}$: noisy counts and variances for all $v \in T$.%
\STATE {\bf Output:} $(\hx_v; \hvar_v)_{v \in T}$: post-processed estimates and variance for all $v$.
\STATE %
\STATE \textcolor{black!60}{\emph{\# Bottom-up pass}}
\FOR{leaf $v\in T$}
    \STATE $\triangleright\ (\zUp_v; \varUp_v) \gets (x_v; \var_v)$
\ENDFOR
\FOR{each internal node $v$ from largest to smallest depth}
    \STATE $\triangleright\ (\zup_v; \varup_v) \gets \left(\sum_{u\in\child(v)}\zUp_u;  \sum_{u\in\child(v)} \varUp_u \right)$
    \STATE $\triangleright\ (\zUp_v; \varUp_v) \gets \CombineEstimates((x_v; \var_v), (\zup_v; \varup_v))$
\ENDFOR
\STATE %
\STATE \textcolor{black!60}{\emph{\# Top-down pass}}
\STATE For root $r$:
\STATE $\triangleright\ (\hx_r; \hvar_r) \gets (\zUp_r; \varUp_r)$
\STATE $\triangleright\ (\zDown_r; \varDown_r) \gets (x_r; \var_r)$
\FOR{each non-root node $v$ from smallest to largest depth}
    \STATE $\triangleright\ p \gets \parent(v)$
    \STATE $\triangleright\ (\zdown_v; \vardown_v) \gets \prn{\zDown_p - \hspace{-3mm} \sum\limits_{u \in \child(p) \smallsetminus \{v\}} \hspace{-3mm} \zUp_u; \varDown_p + \hspace{-3mm} \sum\limits_{u \in \child(p) \smallsetminus \{v\}} \hspace{-3mm} \varUp_u}$\hfill
    \textcolor{black!60}{\emph{\# equals $(\zDown_p - \zup_p + \zUp_v; \varDown_p + \varup_p - \varUp_v)$}}
    \STATE $\triangleright\ (\hx_v; \hvar_v) \gets \CombineEstimates((\zUp_v; \varUp_v), (\zdown_v; \vardown_v))$
    \STATE $\triangleright\ (\zDown_v; \varDown_v) \gets \CombineEstimates((x_v; \var_v), (\zdown_v; \vardown_v))$
\ENDFOR
\RETURN $(\hx_v; \hvar_v)_{v\in T}$
\end{algorithmic}
\end{algorithm*}

We generalize this algorithm to arbitrary trees in \Cref{alg:post-processing}.
This allows us to handle trees with different fanouts and noise at different levels or even at different nodes in the same level,
which is the case for the conversion reporting trees we study. The input to the algorithm are independent noisy counts $x_v$ for all nodes $v$ in $T$, with $\E x_v = c_v$ and variance $\mathsf{Var}(x_v) = \var_v$.

The key idea is a simple method (\Cref{alg:combine-estimates}) to combine two {\em independent} and {\em unbiased} estimates of the same quantity to get an estimate with reduced variance as follows: Suppose $X$ and $Y$ are two {\em independent} random variables with $\E X = \E Y = C$, with variances $\var_X$ and $\var_Y$ respectively, then the optimal convex combination of $X$ and $Y$ that minimizes the variance in the estimate of $C$ is inversely proportional to the variances, namely $Z = (\var_X \cdot Y + \var_Y \cdot X) / (\var_X + \var_Y)$, which has variance $\Var(Z) = (\var_X \cdot \var_Y) / (\var_X + \var_Y)$.

\Cref{alg:post-processing} comprises of two linear-time passes. The first pass proceeds bottom-up from the leaves to the root.
For each non-leaf node $v$ it recursively computes $\zup_v$, which is an optimal linear unbiased estimate of $c_v$, using only the noisy counts $x_u$ for all $u$ that are in the sub-tree rooted at $v$, {\em excluding} $v$. This is combined with the noisy count $x_v$ to compute $\zUp_v$, which is an optimal linear unbiased estimate of $c_v$, using only the noisy counts $x_u$ for all $u$ that are in the sub-tree rooted at $v$, {\em including} $v$.

The second pass proceeds top-down and computes for each non-root node $v$, a estimate $\zdown_v$, which is an optimal linear unbiased estimate of $c_v$ using only the noisy counts $x_u$ for all $u$ that are {\em not} in the sub-tree rooted at $v$.
The estimates $\zUp_v$ and $\zdown_v$ are then combined to obtain $\hx_v$, which is the optimal linear unbiased estimate of $c_v$ using all the estimates $x_u$ for all $u$ in $T$.
To assist with the recursive procedure, it also computes an estimate $\zDown_v$, which is the optimal linear unbiased estimate of $c_v$ using only the noisy counts $x_u$ for all $u$ that are not strictly in the sub-tree under $v$ (i.e., includes $x_v$). See \Cref{fig:post-processing}.

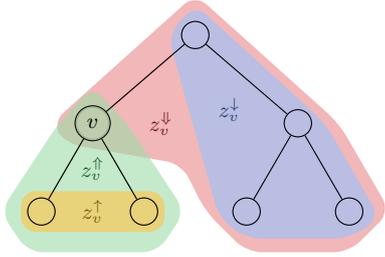
\begin{figure}
\centering
\begin{tikzpicture}[
  vert/.style = {draw, circle, minimum size=4mm},
  level distance=1.3cm,
  level 1/.style={sibling distance=3cm},
  level 2/.style={sibling distance=1.5cm},
  scale = 0.9, transform shape
]
    \def\drawtree{
    \node[vert] (A) {}
    child {
        node[vert] (B) {$v$}
        child { node[vert] (D) {} }
        child { node[vert] (E) {} }
    }
    child {
        node[vert] (C) {}
        child { node[vert] (F) {} }
        child { node[vert] (G) {} }
    };
    }
    \drawtree

    \draw[fill=Gred!50, draw=none, rounded corners=2mm, inner sep=3cm, fill opacity=0.7]
        ($(B)+(-0.6,0)$) --
        ($(A)+(0,0.6)$) --
        ($(C)+(0.6,0)$) --
        ($(G)+(0.6,0)$) --
        ($(G)+(0,-0.6)$) --
        ($(F)+(0,-0.6)$) --
        ($(F)+(-0.6,0)$) --
        ($(A)+(-0.2,-1.9)$) --
        ($(B)+(-0.4,-0.3)$) -- cycle;
    \draw[fill=Gblue!50, draw=none, rounded corners=2mm, inner sep=3cm, fill opacity=0.7]
        ($(A)+(0,0.45)$) --
        ($(C)+(0.4,0)$) --
        ($(G)+(0.45,-0.1)$) --
        ($(G)+(0,-0.4)$) --
        ($(F)+(0,-0.4)$) --
        ($(F)+(-0.4,-0.1)$) --
        ($(A)+(-0.4,0)$) -- cycle;
    \draw[fill=Ggreen!50, draw=none, rounded corners=2mm, inner sep=3cm, fill opacity=0.6]
        ($(B)+(0,0.55)$) --
        ($(B)+(0.5,0)$) --
        ($(E)+(0.6,0)$) --
        ($(E)+(0.2,-0.6)$) --
        ($(D)+(-0.2,-0.6)$) --
        ($(D)+(-0.6,0)$) --
        ($(B)+(-0.5,0)$) -- cycle;
    \draw[fill=Gyellow!70, draw=none, rounded corners=2mm, inner sep=3cm, fill opacity=0.7]
        ($(D)+(-0.3,0.3)$) --
        ($(E)+(0.3,0.3)$) --
        ($(E)+(0.3,-0.3)$) --
        ($(D)+(-0.3,-0.3)$) --
        cycle;
    \normalsize
    \node[black!50!Gred] at ($(B)+(1,0)$) {$\zDown_v$};
    \node[black!50!Gblue] at ($(C)+(-1,0.2)$) {$\zdown_v$};
    \node[black!50!Ggreen] at ($(B)+(0,-0.67)$) {$\zUp_v$};
    \node[black!50!Gyellow] at ($(B)+(0,-1.3)$) {$\zup_v$};

    \drawtree
\end{tikzpicture}
\caption{Intermediate variables in \Cref{alg:post-processing}.}
\label{fig:post-processing}
\end{figure}

\Cref{alg:post-processing} not only reduces the variance of each estimate but also achieves the smallest possible variance among all unbiased linear estimators, simultaneously for all nodes.

\begin{restatable}[Optimality of Post-Processing]{theorem}{lsqOptimality}\label{thm:least-squares-optimal}
For every $v \in T$, $\hx_v$ is the best linear unbiased estimator (BLUE) of the count $c_v$ and has variance $\hvar_v$.
In particular, $(\hx_v)_{v \in T}$ minimizes $\RMSRE_\tau(c_v, \hx_v)$ among all linear unbiased estimators, for all $v \in T$.
\end{restatable}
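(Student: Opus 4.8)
The plan is to reduce the statement to ``the algorithm computes the right estimators'' and then prove that by an induction that mirrors the two passes. View the noisy input as a vector $x$ indexed by the nodes of $T$, with $\E[x_v]=c_v$ and diagonal covariance $\Sigma=\mathrm{diag}((\var_v)_v)$ (positive definite, since all $\var_v>0$). The true counts lie in the \emph{consistency subspace} $V:=\{y:\ y_v=\sum_{u\in\child(v)}y_u\text{ for every internal }v\}$, whose dimension equals the number of leaves. (Equivalently, one may note that the BLUE of $c_v$ is the $v$-coordinate of the $\Sigma^{-1}$-orthogonal projection $\Pi_V x$ by standard least-squares theory; but the proof below is self-contained and does not need this.) I will prove directly that $\hx_v$ has minimum variance among linear unbiased estimators of $c_v$ and that this variance is $\hvar_v$. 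The final sentence of the theorem is then immediate: an unbiased estimator has mean squared error equal to its variance, and $\max(\tau,c_v)$ is a fixed normalizer, so minimizing variance (simultaneously for every $v$) minimizes $\RMSRE_\tau(c_v,\cdot)$ and hence also the aggregate tree error.

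The induction establishes the statements foreshadowed in the text: $\zUp_v$ is the minimum-variance unbiased \emph{linear} estimator of $c_v$ using only $\{x_u:u\text{ in the subtree }T_v\}$; $\zup_v$ is the same using $T_v\setminus\{v\}$; $\zdown_v$ is the same using $T\setminus T_v$; $\zDown_v$ using $(T\setminus T_v)\cup\{v\}$; and $\hx_v$ using \emph{all} of $T$, i.e.\ the global BLUE (for the root, $\hx_r=\zUp_r$ already uses the whole tree). In parallel I would check that $\varUp_v,\varup_v,\vardown_v,\varDown_v,\hvar_v$ equal the actual variances of the corresponding estimators; this reduces entirely to the observation that whenever the algorithm combines two sub-estimates they are functions of \emph{disjoint} sets of the per-node noises and are therefore independent, so variances add in the $\zup_v=\sum_{u\in\child(v)}\zUp_u$ and $\zdown_v$ steps and obey the $\CombineEstimates$ formula otherwise.

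The engine is two elementary facts about combining estimators over disjoint observation sets. \emph{(i) Averaging:} if $X$ and $Y$ are the minimum-variance unbiased linear estimators of the \emph{same} quantity $c_v$ over disjoint node sets $S$ and $S'$, then $\CombineEstimates((X;\var_X),(Y;\var_Y))$ is the minimum-variance unbiased linear estimator of $c_v$ over $S\cup S'$: write an arbitrary such estimator as $e_S+e_{S'}$ (parts supported on $S$ and $S'$); their variances add by independence; one shows $\E[e_S]$ is forced to be a scalar multiple $t\,c_v$ of the target (see the next paragraph), so $\Var(e_S)\ge t^2\var_X$ and $\Var(e_{S'})\ge(1-t)^2\var_Y$, whence $\Var(e_S+e_{S'})\ge\min_t\big(t^2\var_X+(1-t)^2\var_Y\big)=\var_X\var_Y/(\var_X+\var_Y)$, attained exactly by $\CombineEstimates$. \emph{(ii) Additivity:} a sum of minimum-variance unbiased linear estimators of $c_{u_1},\dots,c_{u_k}$ over pairwise-disjoint subtrees is the minimum-variance unbiased linear estimator of $\sum_i c_{u_i}$ over their union --- and likewise for the signed combination $\zDown_p-\sum_{u\in\child(p)\setminus\{v\}}\zUp_u$ used top-down --- because here each piece is forced to be unbiased for its own $c_{u_i}$ on its own. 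With the trivial base case that $x_v$ is the only unbiased linear estimator of $c_v$ from $\{x_v\}$, applying (i) and (ii) along the two passes gives the induction, and $\hx_v=\CombineEstimates((\zUp_v;\varUp_v),(\zdown_v;\vardown_v))$ combines information from $T_v$ and from $T\setminus T_v$, hence is the global BLUE.

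The step that needs genuine care --- and the only place the tree structure is essential --- is the ``forcing'' claim used in (i) and (ii): that the linear functionals of the leaf counts simultaneously estimable from both sides of the relevant partition are exactly the scalar multiples of $c_v$ (and, for two sibling subtrees, only $0$). The functionals estimable from a node set $S$ are $\mathrm{span}\{c_u:u\in S\}$ (since $\E[\sum_{u\in S}\alpha_u x_u]=\sum_u\alpha_u c_u$ and each $c_u$ equals the sum of the leaf counts below $u$). For each partition arising in the algorithm --- $(\{v\},T_v\setminus\{v\})$, $(T_v,T\setminus T_v)$, $(\{v\},T\setminus T_v)$, and the sibling-subtree partition --- one checks that this span, intersected across the two sides, collapses to $\mathrm{span}\{c_v\}$ (resp.\ to $\{0\}$); the computation uses that within any union of subtrees hanging below a fixed node $w$, all leaves below a given child of $w$ get a \emph{common} coefficient in any linear combination of the node-counts on that side, so the only ``shared'' functional is the coefficient-constant one, namely a multiple of $c_v$. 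This is precisely the single-separator (Markov) property of trees --- it is what would fail for a general system of aggregation constraints --- and it is the heart of the argument; the base cases, the variance recursions, and assembling the pieces are routine.
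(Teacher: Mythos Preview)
Your proposal is correct and takes a genuinely different route from the paper. The paper proves the theorem by (i) showing, via an induction on tree size that coalesces a node with its leaf children, that the algorithm's output satisfies \emph{consistency} and \emph{weighted root-to-leaf sum preservation} (\Cref{lem:post-process-properties}); (ii) observing that these two properties are exactly the normal-equation conditions characterizing the minimizer of the weighted least-squares objective $\sum_u (x_u-(A\tilde\theta)_u)^2/\var_u$; and (iii) invoking the Gauss--Markov theorem to conclude that this least-squares solution is the BLUE. Your argument bypasses Gauss--Markov entirely: you run an induction that follows the two passes of the algorithm itself, showing that each intermediate quantity ($\zUp_v,\zup_v,\zdown_v,\zDown_v$) is already the minimum-variance unbiased linear estimator over its own observation set, and that $\CombineEstimates$ and the signed sums preserve this optimality. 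The substantive step is your ``forcing'' (span-intersection) claim, which is exactly the tree-specific fact that replaces the global least-squares/Gauss--Markov machinery. The paper's route cleanly separates the algebraic characterization (two checkable properties of the output) from the probabilistic optimality (an off-the-shelf theorem), and yields consistency of $(\hx_v)_v$ as an explicit byproduct. Your route is more elementary and self-contained, gives a clearer picture of why each step of the algorithm is locally optimal, and isolates precisely where the tree structure is used; it also makes the variance claims ($\hvar_v=\Var(\hx_v)$) fall out of the same induction rather than requiring a separate pass as in the paper.
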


This extends the results from \cite{hay2009boosting,cormode2012differentially}, which work only for regular trees, 
to arbitrary trees. 
Similar to their proofs, the theorem above follows once we show that the (appropriately scaled) estimates $(\hx_v)_{v \in T}$ are an ordinary least squares estimator (OLS) of the counts $(c_v)_{v \in T}$. Proving the latter boils down to showing that the estimates satisfy the two conditions in the following lemma:

\begin{restatable}{lemma}{postProcessProps} \label{lem:post-process-properties}
For any $T$ and $(x_v; \var_v)_{v \in T}$, let $(\hx_v; \hvar_v)_{v \in T}$ be the output of \Cref{alg:post-processing}. Then, the following hold:
\begin{itemize}%
\item (Consistency) For all internal $v$ : $\hx_v = \sum_{u \in \child(v)} \hx_u$.
\item (Weighted Root-to-Leaf Sum Preservation) For each leaf $v$, $\sum_{u \in \anc(v)} \frac{x_u}{\var_u}  = \sum_{u \in \anc(v)} \frac{\hx_u}{\var_u}$,
where $\anc(v)$ denotes the nodes on the path from $v$ to $r$ (inclusive).
\end{itemize}
\end{restatable}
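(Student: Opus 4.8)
I would verify both properties directly from the recursion of \Cref{alg:post-processing}, leaning throughout on two elementary facts about $\CombineEstimates$: if $(z;\var_z)=\CombineEstimates((a;\var_a),(b;\var_b))$ then $\tfrac1{\var_z}=\tfrac1{\var_a}+\tfrac1{\var_b}$ and $\tfrac{z}{\var_z}=\tfrac{a}{\var_a}+\tfrac{b}{\var_b}$ (precisions add; precision-weighted values add), together with the ``variance balance'' identity, immediate from the update $\vardown_v\gets\varDown_p+\sum_{u\in\child(p)\smallsetminus\{v\}}\varUp_u$,
\[
\varUp_v+\vardown_v \;=\; \varDown_p+\varup_p \qquad (p=\parent(v)),
\]
whose right-hand side does not depend on which child of $p$ is chosen. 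Applying the $\CombineEstimates$ facts along the two passes gives, for every non-root $v$ with parent $p$, the closed forms $\tfrac{\zUp_v}{\varUp_v}=\tfrac{x_v}{\var_v}+\tfrac{\zup_v}{\varup_v}$ (internal $v$), $\tfrac{\zDown_v}{\varDown_v}=\tfrac{x_v}{\var_v}+\tfrac{\zdown_v}{\vardown_v}$, $\tfrac{\hx_v}{\hvar_v}=\tfrac{\zUp_v}{\varUp_v}+\tfrac{\zdown_v}{\vardown_v}$, and — after substituting $\zdown_v=\zDown_p-\zup_p+\zUp_v$ and noting $\tfrac{\hvar_v}{\vardown_v}=\tfrac{\varUp_v}{\varUp_v+\vardown_v}=\tfrac{\varUp_v}{\varDown_p+\varup_p}$ — the key rewriting
\[
\hx_v \;=\; \zUp_v \;+\; \frac{\varUp_v}{\varDown_p+\varup_p}\,\bigl(\zDown_p-\zup_p\bigr).
\]

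\textbf{Consistency.} Fix an internal node $p$ and sum this rewriting over $u\in\child(p)$. Since $\sum_u\zUp_u=\zup_p$ and $\sum_u\varUp_u=\varup_p$, the correction terms collapse and
\[
\sum_{u\in\child(p)}\hx_u \;=\; \zup_p + \frac{\varup_p}{\varDown_p+\varup_p}\bigl(\zDown_p-\zup_p\bigr) \;=\; \frac{\varup_p\,\zDown_p+\varDown_p\,\zup_p}{\varDown_p+\varup_p},
\]
which is the first coordinate of $\CombineEstimates\bigl((\zDown_p;\varDown_p),(\zup_p;\varup_p)\bigr)$. It then suffices to see $\hx_p$ equals this same value: for $p=r$ this is literal, since $\zDown_r=x_r,\varDown_r=\var_r$, so that combination is $(\zUp_r;\varUp_r)=(\hx_r;\hvar_r)$ by the bottom-up step; and for a non-root internal $p$ the precision-weighted identities collapse $\tfrac{\hx_p}{\hvar_p}$ to $\tfrac{\zDown_p}{\varDown_p}+\tfrac{\zup_p}{\varup_p}$ and $\tfrac1{\hvar_p}$ to $\tfrac1{\varDown_p}+\tfrac1{\varup_p}$, i.e.\ $(\hx_p;\hvar_p)=\CombineEstimates\bigl((\zDown_p;\varDown_p),(\zup_p;\varup_p)\bigr)$.

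\textbf{Weighted root-to-leaf sum preservation.} This is recognizably the normal equation of the underlying weighted least-squares fit; I would nonetheless prove it directly. Introduce a potential $\phi_v:=\tfrac{\zDown_v-\zup_v}{\varDown_v+\varup_v}$ at every internal node $v$, with the conventions $\phi_v:=0$ at leaves and $\phi_{\parent(r)}:=0$. The engine of the proof is the local identity
\[
\frac{\hx_v-x_v}{\var_v} \;=\; \phi_{\parent(v)}-\phi_v \qquad\text{for every }v\in T.
\]
Granting it, the lemma follows at once: for a leaf $w$, summing over $u\in\anc(w)$ telescopes along the root-to-leaf path to $\phi_{\parent(r)}-\phi_w=0$, which is exactly $\sum_{u\in\anc(w)}\tfrac{x_u}{\var_u}=\sum_{u\in\anc(w)}\tfrac{\hx_u}{\var_u}$.

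\textbf{The main obstacle.} The only real work is the algebra behind the local identity. For an internal non-root $v$, split $x_v-\hx_v=(x_v-\zUp_v)+(\zUp_v-\hx_v)$: the first piece equals $\tfrac{\var_v(x_v-\zup_v)}{\var_v+\varup_v}$ by the definition of $\zUp_v$, and the second equals $-\tfrac{\varUp_v}{\varDown_p+\varup_p}(\zDown_p-\zup_p)$ by the key rewriting; dividing by $\var_v$ and using $\tfrac{\varUp_v}{\var_v}=\tfrac{\varup_v}{\var_v+\varup_v}$, one finds that both $\tfrac{\hx_v-x_v}{\var_v}$ and $\phi_{\parent(v)}-\phi_v$ reduce to
\[
\frac{\varup_v(\zDown_p-\zup_p)-(x_v-\zup_v)(\varDown_p+\varup_p)}{(\var_v+\varup_v)(\varDown_p+\varup_p)},
\]
where matching the $\phi_v$ denominator to $(\var_v+\varup_v)(\varDown_p+\varup_p)$ uses the variance balance identity and, once more, $\varUp_v=\tfrac{\var_v\varup_v}{\var_v+\varup_v}$. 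The root ($\zDown_r=x_r$, giving $\tfrac{\hx_r-x_r}{\var_r}=-\phi_r=\phi_{\parent(r)}-\phi_r$) and the leaves (where $\zUp_v=x_v$, so $\zdown_v=\zDown_p-\zup_p+x_v$ and the variance balance identity give $\tfrac{\hx_v-x_v}{\var_v}=\tfrac{\zDown_p-\zup_p}{\varDown_p+\varup_p}=\phi_{\parent(v)}=\phi_{\parent(v)}-\phi_v$) are short special cases. The point to watch is precisely that leaves, for which $\zup_v,\varup_v$ are undefined, genuinely need this separate treatment rather than the generic computation; everything else is routine bookkeeping in the variables $\var_v,\varUp_v,\varup_v,\vardown_v,\varDown_v$, which only cancel after repeated appeals to the two $\CombineEstimates$ identities and the variance balance identity.
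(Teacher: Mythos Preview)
Your argument is correct, and it takes a genuinely different route from the paper's proof. The paper proceeds by strong induction on the size of $T$: it picks an internal node $v^*$ all of whose children are leaves, forms the smaller tree $T'$ by deleting those children and replacing $(x_{v^*};\var_{v^*})$ with $(\zUp_{v^*};\varUp_{v^*})$, observes that \Cref{alg:post-processing} on $T$ and on $T'$ agree on $T'$, applies the inductive hypothesis, and then checks consistency at $v^*$ and weighted root-to-leaf preservation for the deleted leaves by direct calculation. Your proof, by contrast, is entirely non-inductive: you extract the closed form $\hx_v=\zUp_v+\tfrac{\varUp_v}{\varDown_p+\varup_p}(\zDown_p-\zup_p)$ from the variance balance $\varUp_v+\vardown_v=\varDown_p+\varup_p$, sum it over siblings to get consistency, and for the second property introduce the potential $\phi_v=\tfrac{\zDown_v-\zup_v}{\varDown_v+\varup_v}$ and prove the local telescoping identity $\tfrac{\hx_v-x_v}{\var_v}=\phi_{\parent(v)}-\phi_v$. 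The paper's induction is conceptually cleaner and localizes the calculation to a single step near $v^*$, at the cost of having to set up and verify the coalescing correspondence; your approach avoids induction altogether and makes the least-squares normal-equation structure visible through the potential $\phi$, at the cost of a heavier one-shot algebraic verification (notably the reduction of $\phi_v$ to a form with denominator $(\var_v+\varup_v)(\varDown_p+\varup_p)$, which indeed goes through using $\varUp_v(\var_v+\varup_v)=\var_v\varup_v$ and the variance balance). Both are valid; yours is arguably more illuminating about \emph{why} the root-to-leaf sums are preserved.
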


The proof of \Cref{lem:post-process-properties} is by induction on the number of nodes in the tree. The inductive step is done by selecting a node whose children are all leaves and ``coalescing'' all of the node's children. We provide the full proofs in \Cref{apx:least-squares-optimality}.

\section{Privacy Budgeting Over Tree Levels
}\label{sec:privacy_budgeting}

To allocate the privacy budget across the levels of the tree, the simplest approach is to divide it equally among the levels, or to put all of the budget on the lowest level. However, basic composition (\Cref{le:basic_composition_DP}) allows us to allocate the privacy budget arbitrarily among the nodes of the tree, and we can apply post-processing (\Cref{alg:post-processing}) to noisy initial estimates with unequal variances as well. This motivates the question of whether we can improve accuracy with an unequal privacy budget allocation, and if so, how.

Given the true counts $c$, we use a greedy iterative approach for optimizing the privacy budget allocation. Let $k$ be the number of phases (e.g.,\ $k=20$) corresponding to the granularity of the allocation. Initially allocate zero (or infinitesimal) privacy budget to each level, and divide the (remaining) privacy budget into $k$ units of size $\eps/k$. In each of $k$ phases, select the level that would result in the lowest $\RMSRE_\tau(T)$
when using $\eps/k$ additional privacy budget,
and increase the privacy budget of that level by $\eps/k$.
The error $\RMSRE_\tau(T)$ can be computed directly using the variance $\hvar_v$ of each post-processed estimate $\hx_v$ as returned by \Cref{alg:post-processing}. We present the details of this greedy iterative approach in \Cref{apx:greedy-budgeting}.

Using the true counts to optimize the privacy budget allocation can leak sensitive information and violate the privacy guarantee, and is infeasible in the API. Instead of using the true counts to allocate the privacy budget, one can use alternatives such as simulated data, or historical data that  is not subject to the privacy constraints (e.g., before third-party cookie deprecation), or noisy historical data
that has already been protected with DP (e.g., the output of the API over data from a previous time period). We refer to this family of privacy budget optimization methods as \emph{prior-based}. When no such alternatives are available, one could start out with a suboptimal privacy budgeting strategy (e.g.,\ uniform privacy budgeting across levels) and improve the allocation over time.

\section{Experimental Evaluation}\label{sec:experimental_evaluation}

We evaluate the algorithms on two public ad conversion datasets.

\paragraph{Criteo Sponsored Search Conversion Log (\CSSCL) Dataset~\citep{tallis2018reacting}}

This dataset contains $15{,}995{,}634$ clicks obtained from a sample of $90$-day logs of live traffic from Criteo Predictive Search. Each point contains information on a user action (e.g., time of click on an ad) and a potential subsequent conversion (purchase of the corresponding product) within a $30$-day attribution window. We consider the attributes:
\texttt{partner\_id},
\texttt{product\_country},
\texttt{device\_type},
\texttt{product\_age\_group} \&
\texttt{time\_delay\_for\_conversion}.
The last attribute is discretized into $2$-day (or $6$-day) buckets so that there are at most $15$ ($5$ respectively) possible values of the rounded time delay. The first $4$ attributes are considered {\em known} (they only relate to the impression), whereas the last is considered {\em unknown}. For the discretization into $6$-day buckets, we retain the unknown attribute and $3$ known attributes instead of $4$ (omitting \texttt{product\_age\_group}), resulting in a depth $4$ tree (instead of depth $5$).

\newcommand{\myfigwidth}{0.9\columnwidth}
\begin{figure*}
  \centering
  \subfigure[$\tau=5$, depth $5$ tree]{
    \includegraphics[width=\myfigwidth]
    {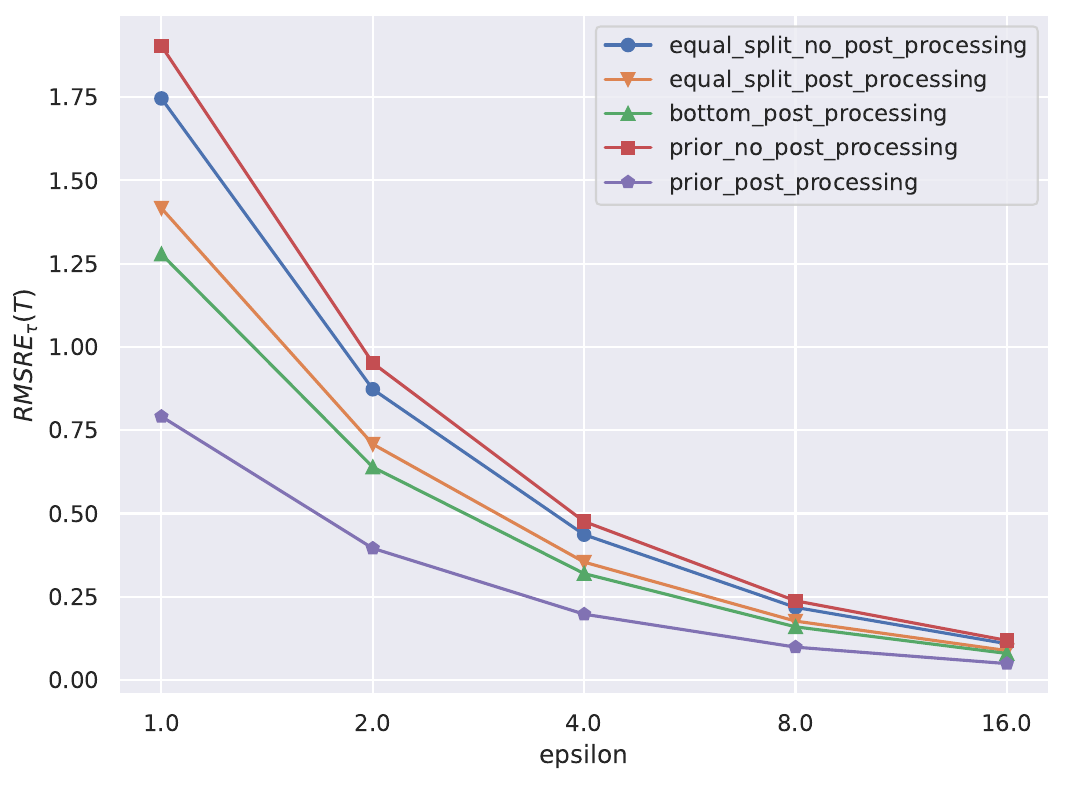}
    \label{fig:t5_noisy_ds1}
  }
  \hspace{8mm}
  \subfigure[$\tau=10$, depth $5$ tree]{
    \includegraphics[width=\myfigwidth]{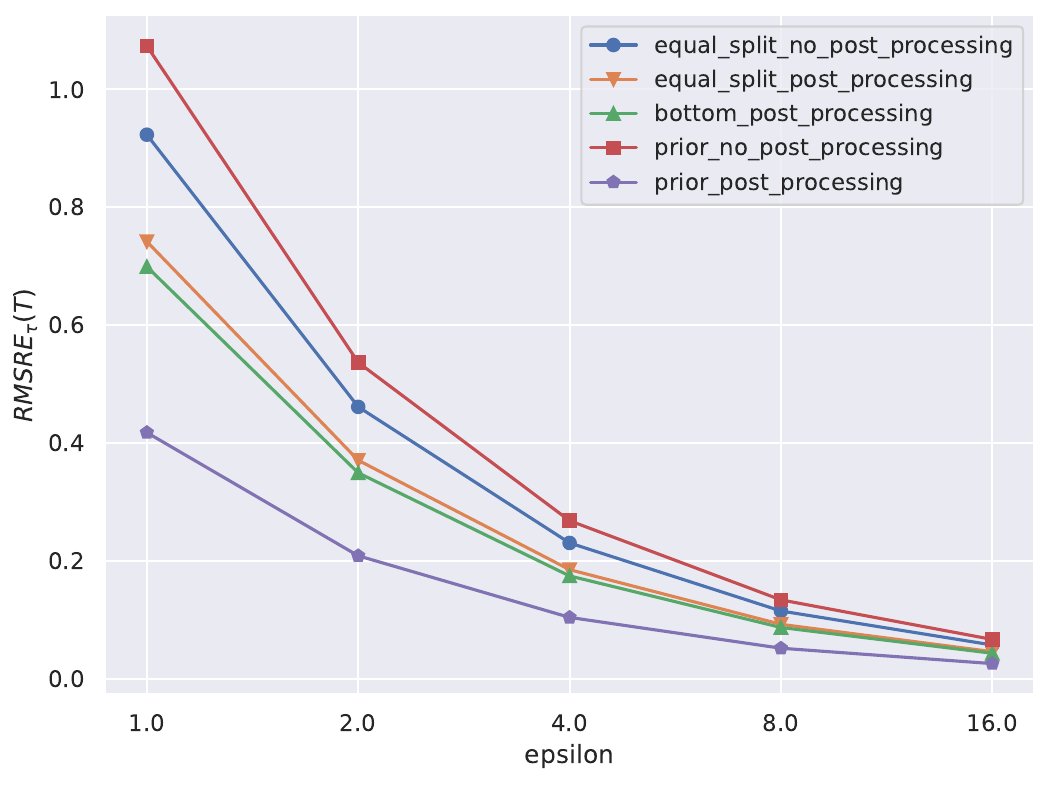}
    \label{fig:t10_noisy_ds1}
  }

  \subfigure[$\tau=5$, depth $4$ tree]{
    \includegraphics[width=\myfigwidth]{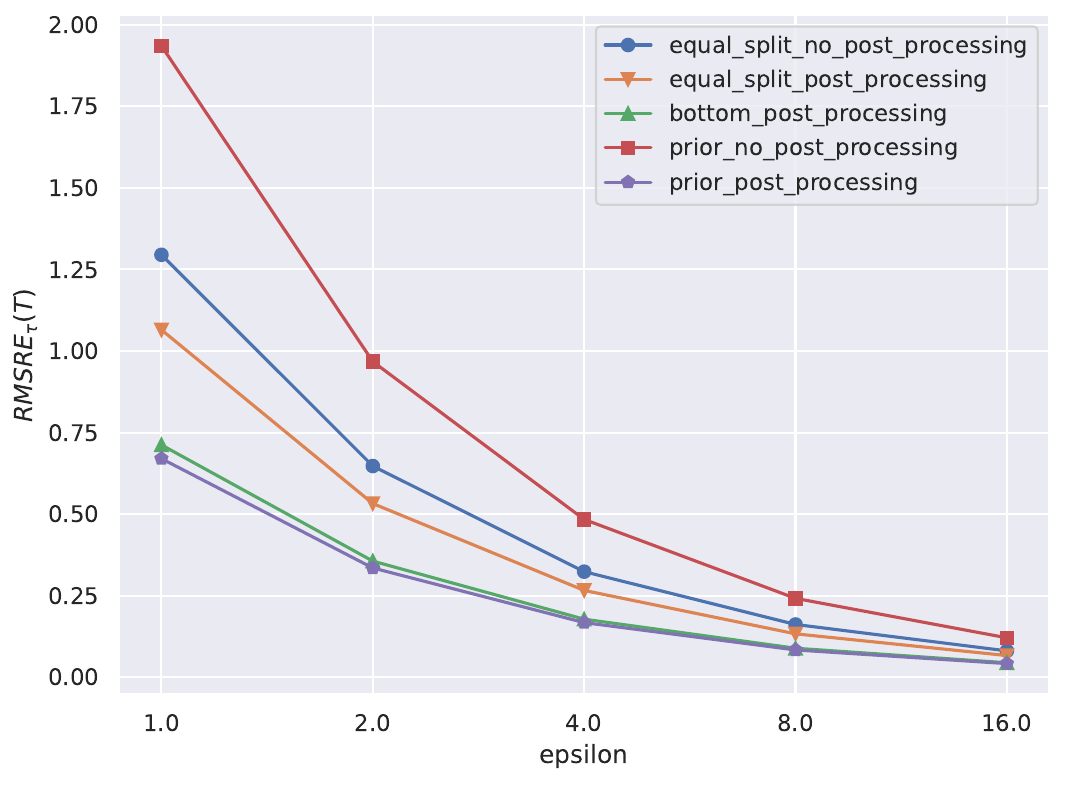}
    \label{fig:t5_noisy_simpler_tree_ds1}
  }
  \hspace{8mm}
  \subfigure[$\tau=10$, depth $4$ tree]{
    \includegraphics[width=\myfigwidth]{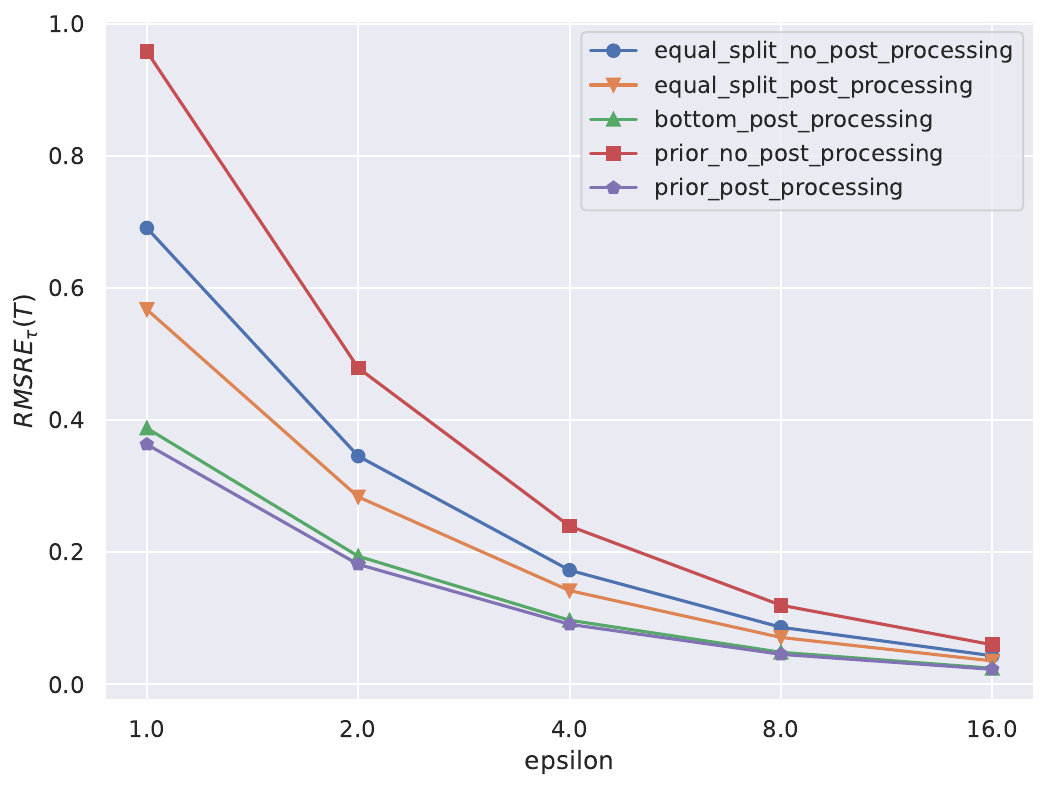}
    \label{fig:t10_noisy_simpler_tree_ds1}
  }

  \caption{$\RMSRE_\tau(T)$ vs $\eps$ for $\tau \in \{5, 10\}$ on \CSSCL{} dataset, with noisy prior obtained via equal budget split and $\eps = 1$.}
  \label{fig:rmsre_vs_eps_noisy_prior_ds1}
\end{figure*}

\begin{figure*}
  \centering
  \subfigure[$\tau=5$, depth $4$ tree]{
    \includegraphics[width=\myfigwidth]{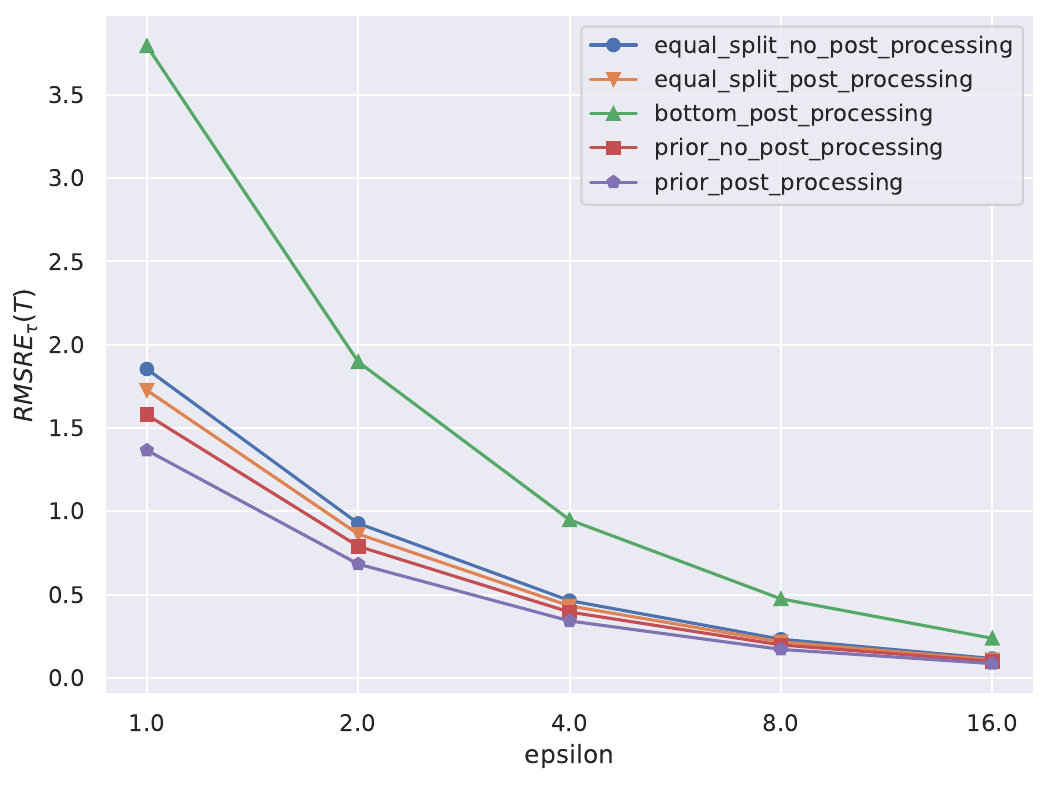}
    \label{fig:t5_noisy_ds2}
  }
  \hspace{8mm}
  \subfigure[$\tau=10$, depth $4$ tree]{
    \includegraphics[width=\myfigwidth]{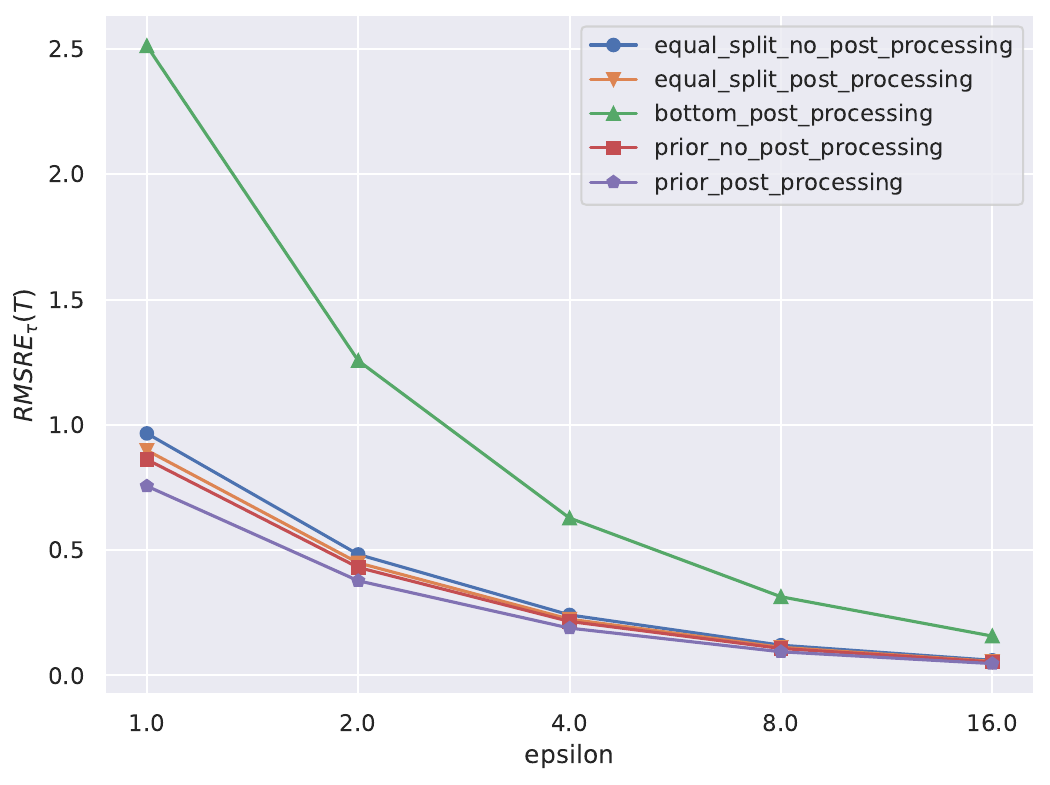}
    \label{fig:t10_noisy_ds2}
  }

  \subfigure[$\tau=5$, depth $3$ tree]{
    \includegraphics[width=\myfigwidth]{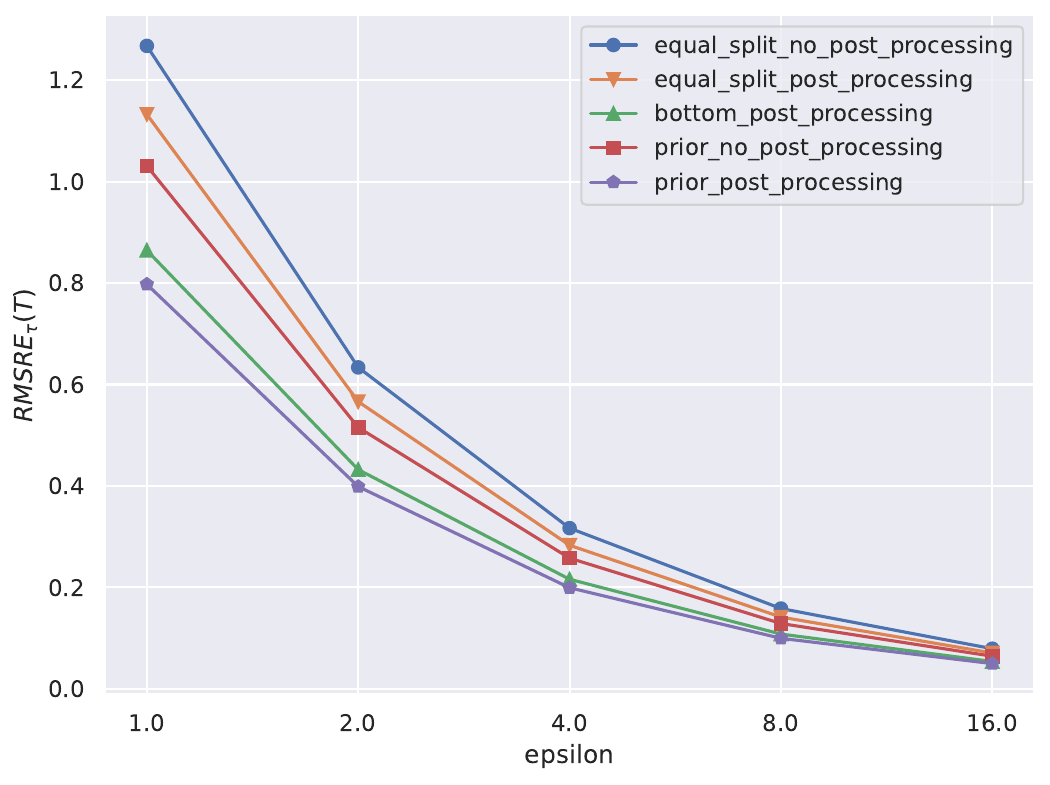}
    \label{fig:t5_noisy_simpler_tree_ds2}
  }
  \hspace{8mm}
  \subfigure[$\tau=10$, depth $3$ tree]{
    \includegraphics[width=\myfigwidth]{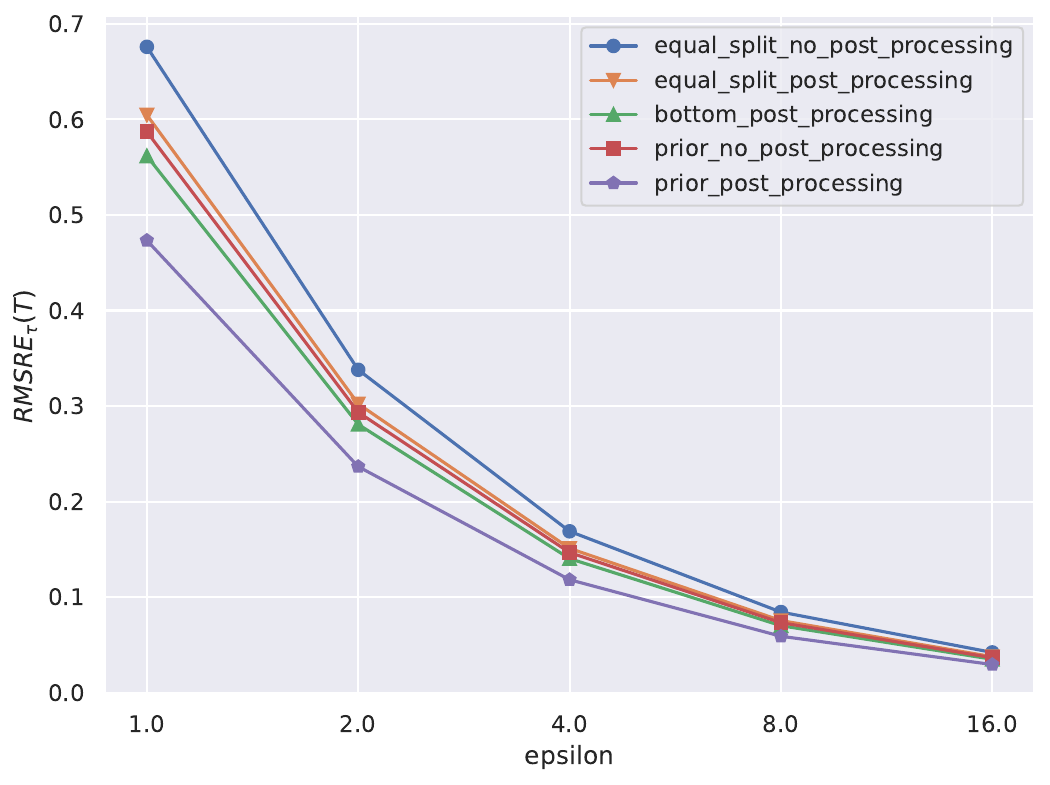}
    \label{fig:t10_noisy_simpler_tree_ds2}
  }

  \caption{$\RMSRE_\tau(T)$ vs $\eps$ for $\tau \in \{5, 10\}$ on \CAMB{} dataset with noisy prior obtained via equal budget split and $\eps = 1$.}
  \label{fig:rmsre_vs_eps_noisy_prior_ds2}
\end{figure*}

In \Cref{fig:rmsre_vs_eps_noisy_prior_ds1}, we plot $\RMSRE_\tau$ versus $\eps$ for various different methods evaluated on the later 45 days of data, namely,
\begin{itemize}[nosep,leftmargin=*]
\item Equal privacy budget split across levels without any post-processing.
\item Equal privacy budget split across levels with post-processing (\Cref{alg:post-processing}).
\item All privacy budget on leaves with post-processing.
\item Prior-based optimized privacy budget split across levels optimized for each \texttt{partner\_id} without any post-processing.
\item Prior-based optimized privacy budget split across levels optimized for each \texttt{partner\_id} with post-processing  (\Cref{alg:post-processing}).
\end{itemize}

For the prior-based privacy budget split optimization, the privacy budgeting was performed using a noisy prior computed on the first $45$ days (privately estimated with $\eps=1$ and an equal privacy budget split over levels). For \texttt{partner\_id}'s that appear only in the later $45$ days, the prior is computed on all the \texttt{partner\_id}'s that appear in the first $45$ days of data. E.g., for $\eps = 4$ and the depth $5$ tree, $\RMSRE_{10} \approx 0.1$, and for the depth $4$ tree, $\RMSRE_5 \approx 0.17$.

\paragraph{Criteo Attribution Modeling for Bidding (CAMB) Dataset \cite{criteo-attribution-modeling}}

It consists of $\sim$16M impressions from $30$ days of Criteo live traffic. We consider last-touch attribution with impression attributes: \texttt{campaign}, categorical features \texttt{cat1}, \texttt{cat8}, and a discretization of conversion delay, i.e., the gap between \texttt{conversion\_timestamp} and (impression) \texttt{timestamp}. As for \CSSCL, we consider two discretizations for the difference, with two tree depths. \Cref{fig:rmsre_vs_eps_noisy_prior_ds2} shows the plots for \CAMB{}. The privacy budgeting was performed similarly to \CSSCL{} but with the noisy prior computed on the first $15$ days. 
For $\eps = 4$ and the depth $4$ tree, $\RMSRE_{10} \approx 0.19$, and for the depth $3$ tree (omitting attribute \texttt{cat8}), $\RMSRE_5 \approx 0.20$ and $\RMSRE_{10} \approx 0.12$. %

Our prior-based budgeting with post-processing method equals or outperforms all other approaches in each setting (\Cref{fig:rmsre_vs_eps_noisy_prior_ds1,fig:rmsre_vs_eps_noisy_prior_ds2}).

\section{Conclusion and Future Directions}\label{sec:conclusion}
In this work, we studied hierarchical querying of the Attribution Reporting API, and presented algorithms for consistency enforcement and privacy budgeting, demonstrating their performance on two public ad datasets. We next discuss some interesting future directions. We focused on the so-called OPC (``one per click'') setting where each impression gets at most a single attributed conversion. An important direction is to consider the extension to the more general MPC (``many per click'') setting where an impression can get multiple attributed conversions. It would also be interesting to extend our treatment of conversion counts to the task of estimating conversion values. While the error is small for values of $\eps$ around $16$ in our evaluation, we note that this is specific to the datasets and the (restricted) functionality that we study (i.e., conversion counts with OPC). Achieving small errors on additional functionalities (e.g., MPC or conversion values) will likely require larger values of $\eps$. Another natural direction is to extend the consistency enforcement algorithm to ensure monotonicity (i.e., that the output estimate for a node of the tree is at least the output estimate for any of its children), and non-negativity. Our privacy budgeting method optimizes for one privacy parameter \emph{per-level} of the tree; it would be good to explore the extent to which \emph{per-node} privacy budgeting can yield higher accuracy. While we considered \emph{data-independent} weights when defining the tree error in terms of the node errors, there could be situations where \emph{data-dependent} weights are preferable, e.g., to avoid the tree error being dominated by the error in many nodes with no conversions, while being insensitive to the error of few nodes where most of the conversions occur. Another interesting direction to study privacy budgeting under approximate DP~\cite{dwork2006our}. Our work considered the hierarchical query model; a natural direction is to optimize the \emph{direct query model} \cite{hierarchical-vs-direct}. Finally, the Attribution Reporting API also offers \emph{event-level reports} \cite{event-api-android}. It would be interesting to see if these could be also used to further improve the accuracy for hierarchical queries.

\newpage
\mbox{}
\newpage
\mbox{}
\newpage

\balance
\bibliography{main.bbl}

\newpage
\appendix

\section{Proof of Least Squares Optimality}\label{apx:least-squares-optimality}

We provide the proof of \Cref{thm:least-squares-optimal}, starting with the proof of \Cref{lem:post-process-properties}, restated below for convenience.

\postProcessProps*

\newcommand{\zpup}[1]{{z'_{#1}}^{\uparrow}}
\newcommand{\zpUp}[1]{{z'_{#1}}^{\Uparrow}}
\newcommand{\zpdown}[1]{{z'_{#1}}^{\downarrow}}
\newcommand{\zpDown}[1]{{z'_{#1}}^{\Downarrow}}
\newcommand{\varpup}[1]{{\var'_{#1}}^{\uparrow}}
\newcommand{\varpUp}[1]{{\var'_{#1}}^{\Uparrow}}
\newcommand{\varpdown}[1]{{\var'_{#1}}^{\downarrow}}
\newcommand{\varpDown}[1]{{\var'_{#1}}^{\Downarrow}}

\begin{proof}
We will prove this by (strong) induction on the number of nodes in $T$. The base case where $T$ has a single node is immediate as the bottom-up and top-down passes are vacuous. We use the inductive hypothesis that both {\em consistency} and {\em weighted root-to-leaf sum preservation} properties hold for the output of \Cref{alg:post-processing} on any tree of at most $m$ nodes for $m \in \N$.

Let $T$ be any tree with $m + 1$ nodes. %
For any input $(x_v; \var_v)_{v \in T}$, let $(\zup_v; \varup_v)$, $(\zUp_v; \varUp_v)$, $(\zdown_v; \vardown_v)$, $(\zDown_v; \varDown_v)$, $(\hx_v; \hvar_v)$ be the values computed by \Cref{alg:post-processing} for every $v \in T$

Let $v^*$ be any internal node whose children are all leaves,
and $T'$ denote the tree with its children removed. For all $v \in T'$, define
\begin{align} \label{eq:assignment-to-reduced-tree}
(x'_v; \var'_v) =
\begin{cases}
(x_v; \var_v) & \text{ if } v \ne v^*, \\
(\zUp_v; \varUp_v) & \text{ if } v = v^*.
\end{cases}
\end{align}
Similar to above, let $(\zpup{v}; \varpup{v})$, $(\zpUp{v}; \varpUp{v})$, $(\zpdown{v}; \varpdown{v})$, $(\zpDown{v}; \varpDown{v})$, $(\hx_v'; \hvar_v')$ be as defined in \Cref{alg:post-processing} when run on the tree $T'$ with input $(x'_v; \var'_v)_{v \in T'}$. From the inductive hypothesis, for all internal nodes $v \in T'$, we have
\begin{align}\label{eq:consistent-hypothesis}
\hx'_v = \sum_{u \in \child(v)} \hx'_u,
\end{align}
 and for every leaf $v \in T'$, we have
\begin{align}\label{eq:weight-root-to-leaf-hypothesis}
\sum_{u \in \anc(v)} \frac{x'_u}{\var'_u} = \sum_{u \in \anc(v)} \frac{\hx'_u}{\var'_u}.
\end{align}

When we run \Cref{alg:post-processing} on $T$, after setting $\zUp_{v^*}$ and $\varup_{v^*}$, the rest of the bottom-up pass is exactly the same as that of the run on $T'$. Similarly, the top-down pass of $T'$ is the same as that of $T$ except that in $T$ we also set the values of $\hx_u$ and $\hvar_u$ for all $u \in \child(v^*)$. Therefore,
\begin{align} \label{eq:reduced-to-orig}
\hx_v =
\begin{cases}
\hx'_v &\text{ if } v \notin \child(v^*), \\
\frac{\vardown_v \cdot x_v + \var_v \cdot \zdown_v}{\vardown_v + \var_v} &\text{ if } v \in \child(v^*),
\end{cases}
\end{align}
where
\begin{align*}
(\zdown_v; \vardown_v) &~=~ (\textstyle \hx_{v^*} - \zup_{v^*} + x_v; \hvar_{v^*} + \varup_{v^*} - \var_v).
\end{align*}
Here we used the observation that for all leaves $v$ of $T'$ (including $v^*$), it holds that $({z'_v}^\Uparrow; {\var'_v}^\Uparrow) = (x_v; \var_v)$ and hence $({z'_v}^\Downarrow; {\var'_v}^\Downarrow) = (\hx_v'; \hvar_v') = (\hx_v; \hvar_v)$.

\paragraph{(Consistency)} For $v \ne v^*$, \Cref{eq:reduced-to-orig} implies that the LHS and RHS of the desired consistency condition is exactly the same as that of \Cref{eq:consistent-hypothesis}. So it only remains to show consistency for $v = v^*$. First, we simplify \Cref{eq:reduced-to-orig} by noting that $\vardown_v + \var_v = \varDown_{v^*} + \varup_{v^*}$ for all $v \in \child(v^*)$, and hence for $v \in \child(v^*)$ we have
\begin{align*}
\hx_v &~=~ x_v + \frac{\var_v}{\varDown_{v^*} + \varup_{v^*}} \prn{\zDown_{v^*} - \zup_{v^*}}\\
&~=~ x_v + \frac{\var_v}{\varup_v} \prn{\frac{\varup_{v^*} \zDown_{v^*} + \varDown_{v^*} \zup_{v^*}}{\varDown_{v^*} + \varup_{v^*}} - \zup_{v^*}}\\
&~=~ x_v + \frac{\var_v}{\varup_{v^*}} \prn{\hx_{v^*} - \zup_{v^*}}\,.
\end{align*}
Hence, we have
\begin{align*}
& \sum_{u \in \child(v^*)} \hx_u\\
&~=~ \sum_{u \in \child(v^*)} \left(x_u + \frac{\var_u}{\varup_{v^*}}\cdot (\hx_{v^*} - \zup_{v^*})\right)\\
&~=~\hx_{v^*}
\end{align*}
where we use that $\zup_{v^*} = \sum_{u \in \child(v^*)} x_u$. Thus, the consistency condition holds for every internal node $v \in T$ as desired.

\paragraph{(Weighted Root-to-Leaf Sum Preservation)}
Again, for $v \ne v^*$, \Cref{eq:reduced-to-orig} and \Cref{eq:weight-root-to-leaf-hypothesis} imply the weighted root-to-leaf sum preservation property for all leaves $v \notin \child(v^*)$. Meanwhile, for $v \in \child(v^*)$, we have
\begin{align}
&\sum_{u \in \anc(v)} \frac{\hx_u}{\var_u} \nonumber \\
&= \frac{\hx_{v}}{\var_v}  + \frac{\hx_{v^*}}{\var_{v^*}} - \frac{\hx_{v^*}}{\varUp_{v^*}} + \sum_{u \in \anc(v^*)} \frac{\hx'_u}{\var'_u} \nonumber \\
&\overset{\eqref{eq:weight-root-to-leaf-hypothesis}}{=} \frac{\hx_{v}}{\var_v}  + \frac{\hx_{v^*}}{\var_{v^*}} - \frac{\hx_{v^*}}{\var'_{v^*}} + \sum_{u \in \anc(v^*)} \frac{x'_u}{\var'_u} \nonumber \\
&\overset{\eqref{eq:assignment-to-reduced-tree}}{=} \frac{\hx_{v}}{\var_v}  + \frac{\hx_{v^*}}{\var_{v^*}} - \frac{\hx_{v^*}}{\varUp_{v^*}} - \frac{x_{v^*}}{\var_{v^*}} + \frac{\zUp_{v^*}}{\varUp_{v^*}} \nonumber \\
&\qquad + \sum_{u \in \anc(v^*)} \frac{x_u}{\var_u}. \label{eq:final_five}
\end{align}

\noindent Recall that
\begin{align*}
\varUp_{v^*} &~=~ \frac{\var_{v^*} \cdot \varup_{v^*}}{\var_{v^*} + \varup_{v^*}}\,,\\
\hx_v &~=~ x_v + \frac{\var_v}{\varup_{v^*}} \prn{\hx_{v^*} - \zup_{v^*}}, \\
\zUp_{v^*} &~=~ \varUp_{v^*} \cdot \left ( \frac{x_{v^*}}{\var_{v^*}} + \frac{\zup_{v^*}}{\varup_{v^*}}\right).
\end{align*}
\noindent Hence the first five terms in \Cref{eq:final_five} can be written as
\begin{align*}
&\frac{\hx_{v}}{\var_v}  + \frac{\hx_{v^*}}{\var_{v^*}} - \frac{\hx_{v^*}}{\varUp_{v^*}} - \frac{x_{v^*}}{\var_{v^*}} + \frac{\zUp_{v^*}}{\varUp_{v^*}} \\
&= \frac{1}{\var_v} \left(x_v + \frac{\var_v}{\varup_{v^*}}\cdot (\hx_{v^*} - \zup_{v^*})\right) \\
&\quad  + \frac{\hx_{v^*}}{\var_{v^*}} - \frac{\var_{v^*} + \varup_{v^*}}{\var_{v^*} \cdot \varup_{v^*}} \cdot \hx_{v^*} - \frac{x_{v^*}}{\var_{v^*}} \\
&\quad + \frac{x_{v^*}}{\var_{v^*}} + \frac{\zup_{v^*}}{\varup_{v^*}} \\
&= \frac{x_v}{\var_v}.
\end{align*}
Thus, we get that
\[
\sum_{u \in \anc(v)} \frac{\hx_u}{\var_u} ~=~ \sum_{u \in \anc(v)} \frac{x_u}{\var_u},
\]
holds for all $v \in T$. This completes our proof.
\end{proof}

We additionally need the Gauss--Markov theorem stated below for noise with non-uniform diagonal covariance.

\begin{theorem}[Gauss--Markov (see e.g., \cite{silvey1975statistical})]\label{lem:gauss-markov}
Fix $A \in \R^{n \times d}$. For an unknown $\theta \in \R^d$, suppose we observe $x = A\theta + e$ where $e \in \R^d$ is drawn such that $e_i$'s are independent with $\E e_i = 0$ and variance $\var_i$ respectively. Then, the least squares estimator $\hat{\theta}$ defined as the minimizer of $\sum_i (x_i - (A \hat{\theta})_i)^2 / \var_i$, is the best unbiased linear estimator (BLUE), namely, for all $\alpha \in \R^d$ it holds that $\Var(\alpha^T \hat{\theta})$ is the smallest among all linear unbiased estimators of $\alpha^\top \theta$.
\end{theorem}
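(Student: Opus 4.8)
The plan is to reduce the BLUE claim to a single orthogonality computation. First I would set $W := \mathrm{diag}(\var_1^{-1}, \dots, \var_n^{-1})$, so that the objective $\sum_i (x_i - (A\hat\theta)_i)^2/\var_i$ equals $(x - A\hat\theta)^\top W (x - A\hat\theta)$ and the observation model has covariance $\mathrm{Cov}(x) = \mathrm{diag}(\var_1, \dots, \var_n) = W^{-1}$. Assuming $A$ has full column rank (otherwise one restricts attention to estimable functionals $\alpha^\top\theta$, for which the argument is unchanged), differentiating the weighted objective and setting the gradient to zero yields the normal equations $A^\top W A\,\hat\theta = A^\top W x$, so the weighted least squares estimator is the linear estimator $\hat\theta = M x$ with $M := (A^\top W A)^{-1} A^\top W$. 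A direct check gives $M A = I$, hence $\E\hat\theta = M A \theta = \theta$, so $\hat\theta$ is unbiased and consequently $\alpha^\top\hat\theta$ is an unbiased linear estimator of $\alpha^\top\theta$ for every $\alpha$.

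Next I would fix $\alpha \in \R^d$ and describe the competing estimators. Any linear estimator of $\alpha^\top\theta$ has the form $c^\top x$ for some $c \in \R^n$, with expectation $c^\top A \theta$; requiring this to equal $\alpha^\top\theta$ for all $\theta$ is exactly the feasibility constraint $A^\top c = \alpha$. Since $\mathrm{Cov}(x) = W^{-1}$, such an estimator has variance $\Var(c^\top x) = c^\top W^{-1} c$. I would then locate the weighted least squares choice inside this family by writing $c_* := M^\top \alpha = W A (A^\top W A)^{-1}\alpha$, and verifying both that $c_*^\top x = \alpha^\top \hat\theta$ and that $c_*$ is feasible, i.e. $A^\top c_* = A^\top W A (A^\top W A)^{-1}\alpha = \alpha$.

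The heart of the argument is a Pythagorean-type decomposition in the $W^{-1}$ inner product. For any feasible $c$, write $c = c_* + \delta$; feasibility of both $c$ and $c_*$ forces $A^\top \delta = 0$. Expanding, $\Var(c^\top x) = c_*^\top W^{-1} c_* + 2\,c_*^\top W^{-1}\delta + \delta^\top W^{-1}\delta$, and the cross term vanishes because $c_*^\top W^{-1}\delta = \alpha^\top (A^\top W A)^{-1} A^\top \delta = 0$. Since $W^{-1} \succ 0$, the remaining term $\delta^\top W^{-1}\delta$ is nonnegative, so $\Var(c^\top x) \ge c_*^\top W^{-1} c_* = \Var(\alpha^\top\hat\theta)$, with equality only when $\delta = 0$. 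As $\alpha$ was arbitrary, this establishes that $\alpha^\top\hat\theta$ is the BLUE of $\alpha^\top\theta$ for all $\alpha$ simultaneously.

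I expect the only delicate points to be bookkeeping rather than conceptual. The step I would be most careful about is the vanishing of the cross term, which is precisely the weighted orthogonality of the least squares residual to the column space of $A$ (expressed here as $A^\top\delta = 0$ meeting $c_*^\top W^{-1}$); getting the transposes right and using the cancellation $W^{-1} W = I$ is where placement errors tend to creep in. The secondary point is the full-rank/identifiability caveat: if $A^\top W A$ is singular then $\theta$ is not identifiable and optimality can only be phrased for estimable $\alpha^\top\theta$, but the same decomposition goes through verbatim once a feasible $c_*$ exists. As a sanity check I would note that the change of variables $\tilde x = W^{1/2} x$ makes the noise isotropic and reduces the claim to the classical homoscedastic Gauss--Markov theorem, but I would carry out the weighted computation directly.
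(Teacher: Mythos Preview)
Your proof is correct and is the standard argument for the weighted Gauss--Markov theorem via the Pythagorean decomposition in the $W^{-1}$ inner product. However, the paper does not actually prove this statement: it is quoted as a classical result with a citation to \cite{silvey1975statistical} and is then used as a black box in the proof of \Cref{thm:least-squares-optimal}. So there is no ``paper's own proof'' to compare against; your write-up simply supplies what the paper takes for granted. The only caveat worth flagging is that the paper's statement does not assume $A$ has full column rank, whereas you invoke $(A^\top W A)^{-1}$; your remark that in the rank-deficient case one restricts to estimable $\alpha^\top\theta$ (those $\alpha$ in the row space of $A$, for which a feasible $c_*$ still exists) is the right fix and covers the paper's application, where $A$ is the leaf-incidence matrix of a tree and hence has full column rank anyway.
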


Finally, we prove \Cref{thm:least-squares-optimal} (restated below) using \Cref{lem:post-process-properties} and \Cref{lem:gauss-markov}.

\lsqOptimality*

\begin{proof}
Corresponding to any tree $T$, we can associate the matrix $A \in \R^{n \times d}$, where $n$ is the number of nodes (both internal and leaf nodes), and $d$ is the number of leaf nodes, such that $A_{u,v}$ is $1$ if leaf $v$ is either equal to or a descendant of $u$ and $0$ otherwise. The estimated counts are given as $x = A\theta + e$ where $\theta \in \R^d$ is such that $\theta_v$ is the true count for leaf $v$. We have that $\hat{\theta}$ is a least squares estimator  minimizing $f(\tilde{\theta}) := \sum_u (x_u - (A \tilde{\theta})_u)^2 / \var_u$ if and only if it satisfies $\nabla f(\hat{\theta}) = 0$ (due to convexity of squared loss). Setting the derivative w.r.t. $\hat{\theta}_v$ equal to $0$, implies that for each leaf $v$,
\begin{align*}
    & \sum_u \frac{x_u - (A \hat{\theta})_u}{\var_u} \cdot A_{u,v} ~=~ 0\\
    \Longleftrightarrow &
    \sum_{u \in \anc(v)} \frac{x_u}{\var_u} ~=~ \sum_{u \in \anc(v)} \frac{(A \hat{\theta})_u}{\var_u}.
\end{align*}

Thus, the best linear unbiased estimate for $(A\theta)_u$ is given by $\hx_u = (A \hat{\theta})_u$, which clearly satisfies {\em consistency} and as shown above also satisfies the {\em weighted root-to-leaf sum preservation} properties of \Cref{lem:post-process-properties}. Conversely, if $(\hx_u)_u$ satisfies consistency, then it must be of the form $A \tilde{\theta}$ for $\tilde{\theta}_u = \hx_u$ for all leaves $u$, and if it satisfies the {\em weighted root-to-leaf sum preservation} property, then as shown above $\tilde{\theta}$ must be the least-squares estimator.

Thus, from \Cref{lem:post-process-properties}, we have that the $(\hx_u)_u$ returned by \Cref{alg:post-processing} satisfies the two properties, we have that each $\hx_u$ is the BLUE for the corresponding true count $c_u$ by \Cref{lem:gauss-markov}.

Finally, to see that $\hvar_v$ is indeed the variance of estimate $\hx_v$, we recursively show that $\varup_v$, $\varUp_v$, $\vardown_v$, $\varDown_v$  are the variances corresponding to $\zup_v$, $\zUp_v$, $\zdown_v$, $\zDown_v$ respectively. The key property to verify is that each $z$ quantity involves a linear combination of estimates which depend on noisy counts of disjoint parts of the tree and hence are independent. Thus, we can repeatedly use the property that for independent drawn $X$ and $Y$, it holds that $\Var(\alpha X + \beta Y) = \alpha^2 \Var(X) + \beta^2 \Var(Y)$.%
\end{proof}

\section{Greedy Iterative Budgeting}\label{apx:greedy-budgeting}

\Cref{alg:greedy-budgeting} presents the greedy iterative approach we used for optimizing the privacy budget allocation using true counts $c$, as alluded to in \Cref{sec:privacy_budgeting}. The high level idea is as follows. We choose a parameter $k$ to be a number of phases (e.g.,\ $k=20$) corresponding to the granularity of the allocation.

Initially, allocate an infinitesimal privacy budget to each level; this is done so that we can use \Cref{alg:post-processing} to compute $\hvar_v$ for each node of the tree; we need this infinitesimal allocation because \Cref{alg:post-processing} as written does not allow $\var_v$ to be $\infty$ for any $v$.\footnote{While it is possible to modify \Cref{alg:post-processing} to support $\var_v = \infty$ for certain subsets of the nodes, we avoid doing so for retaining clarity.}

We divide the (remaining) privacy budget of $\widehat\eps$ into $k$ units of size $\widehat{\eps}/k$.
In each of $k$ phases, we consider adding $\widehat{\eps}/k$ budget to all nodes at level $i$, choosing an $i$ that results in the lowest $\RMSRE_\tau(T)$, which can be computed using \Cref{alg:post-processing}; for any budgeting sequence $(\eps_0, \ldots, \eps_d)$, we set $\var_v$ to be the variance of $\DLap(1/\eps_i)$ for all nodes $v$ in level $L_i$.
Observe that $\RMSRE_\tau(T)$ can be computed directly using the variance $\hvar_v$ of each post-processed estimate $\hx_v$ as returned by \Cref{alg:post-processing} since
\[
\RMSRE_{\tau}(c_v, \hx_v) ~=~ \sqrt{ \frac{\hvar_v}{\max(\tau, c_v)^2}}\,.
\]

\begin{algorithm}
\caption{Privacy budgeting via greedy iterations}
\label{alg:greedy-budgeting}
\begin{algorithmic}
\STATE {\bf Params:} Tree $T$, with levels $L_0, L_1, \ldots, L_d$.
\STATE {\bf Input:} Total privacy budget $\eps$, Number of phases $k$
\STATE {\bf Output:} Budget split $\eps_0, \eps_1, \ldots, \eps_d$ such that $\sum_i \eps_i = \eps$.
\STATE %
\STATE Choose an infinitesimal $\gamma$, e.g., $\gamma \gets 10^{-5}$.
\FOR{$i = 0, \ldots, d$}
    \STATE $\eps_i \gets \gamma \cdot \eps / d$
\ENDFOR
\STATE $\widehat{\eps} \gets (1 - \gamma) \eps$ \hfill \textcolor{black!50}{(remaining privacy budget)}
\FOR{$j = 1, \ldots, k$}
    \FOR{$i = 0, \ldots, d$}
        \STATE $R_i \gets \RMSRE_\tau(T)$ using privacy budget split of $(\eps_0, \ldots, \eps_i + \widehat{\eps} / k, \ldots, \eps_d)$
        \STATE \textcolor{black!50}{(computed using $(\hvar_v)_{v \in T}$ from \Cref{alg:post-processing})}
    \ENDFOR
    \STATE $\ell \gets \mathrm{argmin}_i \ R_i$
    \STATE $\eps_\ell \gets \eps_\ell + \widehat{\eps} / k$
\ENDFOR
\RETURN $(\eps_0, \ldots, \eps_d)$
\end{algorithmic}
\end{algorithm}
\end{document}